\newcommand{\avec}{{\bf{a}}}
\newcommand{\bvec}{{\bf{b}}}
\newcommand{\cvec}{{\bf{c}}}
\newcommand{\evec}{{\bf{e}}}
\newcommand{\yvec}{{\bf{y}}}
\newcommand{\wvec}{{\bf{w}}}
\newcommand{\xvec}{{\bf{x}}}
\newcommand{\vvec}{{\bf{v}}}
\newcommand{\etavec}{{\bf{\eta}}}
\newcommand{\onevec}{{\bf{1}}}
\newcommand{\zerovec}{{\bf{0}}}
\newcommand{\alphavec}{{\bf{\alpha}}}
\newcommand{\Gammamat}{{\bf{\Gamma}}}
\newcommand{\Amat}{{\bf{A}}}
\newcommand{\Dmat}{{\bf{D}}}
\newcommand{\Fmat}{{\bf{F}}}
\newcommand{\Jmat}{{\bf{J}}}
\newcommand{\Imat}{{\bf{I}}}
\newcommand{\Pmat}{{\bf{P}}}
\newcommand{\Smat}{{\bf{S}}}
\newcommand{\Umat}{{\bf{U}}}
\newcommand{\Wmat}{{\bf{W}}}
\newcommand{\Deltamat}{{\bf{\Delta}}}
\newcommand{\define}{\stackrel{\triangle}{=}}
\def\alphavec{{\mbox{\boldmath $\alpha$}}}
\def\etavec{{\mbox{\boldmath $\eta$}}}
\def\thetavec{{\mbox{\boldmath $\theta$}}}
\def\thetavecsmall{{\mbox{\boldmath {\scriptsize $\theta$}}}}
\def\alphavecsmall{{\mbox{\boldmath {\scriptsize $\alpha$}}}}
\newcommand{\be}{\begin{equation}}
\newcommand{\ee}{\end{equation}}
\newcommand{\beqna}{\begin{eqnarray}}
\newcommand{\eeqna}{\end{eqnarray}}
\newtheorem{theorem}{Theorem}
\newtheorem{lemma}{Lemma}
 \newtheorem{definition}{Definition}
 \newcommand{\diag}{\mathop{\mathrm{diag}}}
   \newtheorem{corollary}{Corollary}
\begin{document}
\title{Non-Bayesian Parametric Missing-Mass Estimation}
\author{Shir Cohen, Tirza Routtenberg, \IEEEmembership{Senior Member, IEEE}, and Lang~Tong,~\IEEEmembership{Fellow,~IEEE},
		\thanks{©2022 IEEE.  Personal use of this material is permitted.  Permission from IEEE must be obtained for all other uses, in any current or future media, including reprinting/republishing this material for advertising or promotional purposes, creating new collective works, for resale or redistribution to servers or lists, or reuse of any copyrighted component of this work in other works.\\
		This work is partially supported by  the ISRAEL SCIENCE FOUNDATION (ISF), grant No. 1173/16. The work of Shir Cohen  was supported under a grant from 
the  Ministry of Science and Technology of Israel.}
		\thanks{{\footnotesize{S. Cohen and T. Routtenberg are with the School of Electrical and Computer Engineering Ben-Gurion University of the Negev Beer-Sheva 84105, Israel, e-mail: shiru@post.bgu.ac.il, tirzar@bgu.ac.il.
		L. Tong is with the School of Electrical and Computer Engineering,
Cornell University, Ithaca, NY 14853 USA (e-mail:lt35@cornell.edu).}}}}

\maketitle
	\nopagebreak
	\begin{abstract}
	We consider the classical problem of  missing-mass estimation, which deals with estimating the 
 total probability of unseen elements in a sample.
	The missing-mass estimation problem has various applications in machine learning, statistics,  language processing, ecology, sensor networks, and others. 
	The naive, constrained maximum likelihood (CML) estimator
	is inappropriate for this problem since it
	tends to overestimate the probability of the observed elements.
	Similarly, the conventional constrained Cram$\acute{\text{e}}$r-Rao bound (CCRB),
	which is a lower bound on the mean-squared-error (MSE) of unbiased estimators of the entire probability mass function (pmf) vector, does not provide a relevant bound on the  performance for the problem of missing-mass estimation. 
	In this paper, we introduce a
	frequentist, non-Bayesian parametric  model of the problem of missing-mass estimation. We introduce the  concept of missing-mass unbiasedness
by using the Lehmann unbiasedness definition.
We derive a non-Bayesian CCRB-type lower bound on the missing-mass MSE (mmMSE), named the  missing-mass CCRB (mmCCRB), based on the missing-mass unbiasedness.  
The proposed mmCCRB can be used for system design and for the performance evaluation of existing estimators. 
Moreover, based on the new mmCCRB, we  propose a new method to improve existing estimators by an iterative missing-mass Fisher-scoring method.
Finally, we demonstrate via numerical simulations that 
 the biased version of the  mmCCRB is a valid and informative lower bound on the mmMSE of state-of-the-art  estimators for this problem: the  CML, asymptotic profile maximum likelihood (aPML),  Good-Turing, and  Laplace estimators. We also show that the performance of the Laplace estimator is improved, in  terms of mmMSE and missing-mass bias, by using the new missing-mass Fisher-scoring method.
\end{abstract}
\begin{IEEEkeywords}
Non-Bayesian estimation, Good-Turing estimator, probability of missing mass, constrained Cram$\acute{\text{e}}$r-Rao bound, Lehmann unbiasedness
\end{IEEEkeywords}

\section{Introduction}
\label{sec:intro} 
Given $N$ samples from a population of elements belonging to different types with unknown proportions, how should one  estimate the total probability of  unseen types? This
is a classical problem in statistics, commonly referred to as the missing-mass estimation problem
 \cite{robbins1968estimating,Good_1953}.
Missing-mass estimation has gained significant interest in various applications, such as
 ecological studies \cite{EFRON_THISTED_1976},
sensor networks \cite{Budianu_Tong_2004,Budianu_Ben_David_Tong_2006},
  machine learning, and
statistics.
In the context of 
 language processing, for example, estimation of new and existing words in text has applications such as
 language modeling, spelling correction, and word-sense disambiguation \cite{Chen_Goodman_1996,katz1987estimation}. 
Missing-mass estimation  is  especially  important for applied
problems where the sampling procedure is expensive, and the need for acquiring more data is determined by the possibility
of observing new unobserved elements. 
 
It is well known that the naive, constrained maximum likelihood (CML) estimator of the probability, i.e. 
 the empirical probability,
is ineffective if there are insufficient samples \cite{Orlitsky427,unseenVV}.
In particular, the CML estimator
assigns a zero probability to unseen events,
which does not provide relevant information on the missing mass. As a result, the constrained Cram$\acute{\text{e}}$r-Rao bound (CCRB)  \cite{Hero_constraint,Stoica_Ng_1998,Moore_Sadler_Kozick2008,Nitzan_Routtenberg_Tabrikian2019,Nitzan_Routtenberg_Tabrikian_letter}, which is  associated with the asymptotic performance of the CML estimator, is inappropriate
as a bound on the performance of missing-mass estimators
 outside the asymptotic region.
 This is because the CCRB is a lower bound on the mean-squared-error (MSE) of  the {\em{entire}} probability mass function (pmf) and not on the functional defined by the missing mass.
Various estimators of the 
missing mass  have been suggested over the years \cite{Good_1953,Orlitsky427,laplace2012pierre,Nadas_1985,gale1995good,Braess_Sauer,Krichevsky_Trofimov,Gale_Church_1994,burnham1979robust,good1956number,mao2007estimating}.
However, 
the analysis of these estimators is 
challenging and there is no comprehensive non-Bayesian estimation theory for estimating  the missing mass. In particular, there is a need for appropriate lower bounds on the MSE of the missing mass obtained by any estimator.
This theory  and these bounds are crucial for system design, 
error analysis, and performance analysis of existing estimation methods,
and for the development of new estimation methods.

\subsection{Summary of results}
In this paper, we consider the problem of estimating the missing mass, where it is assumed that we observe samples that are drawn  from an unknown distribution. First, we  introduce a non-Bayesian parametric formulation  of this estimation problem. We  use
the  missing-mass squared-error as a cost function
and derive the associated  Lehmann unbiasedness.
We develop a new non-Bayesian constrained Cram$\acute{\text{e}}$r-Rao bound (CCRB),
the missing-mass CCRB (mmCCRB), which is a lower bound on the missing-mass MSE (mmMSE)
of any  estimator with a specific Lehmann bias.
The new bound is obtained by using linear parametric constraints on the probability space and the Lehmann unbiasedness.
We investigate the properties of the mmCCRB and some special cases of this bound.
Based on the equality condition of the  mmCCRB, we  propose a new method to improve existing estimators by an iterative missing-mass Fisher-scoring method.
 The new bound is examined in simulations and compared with the performance of state-of-the-art estimators: the CML,  Good-Turing, Laplace, and asymptotic profile maximum likelihood (aPML) estimators. 
 We also show that the performance of the Laplace estimator is improved by using the new  missing-mass Fisher-scoring method.

\subsection{Related works} 
Various estimators of the 
missing mass  have been suggested in the literature. A  fundamental example is
the Good-Turing  probability estimator \cite{Good_1953},
which 
was invented  to decipher the Enigma code during  World War II. The  Good-Turing   estimator, its extensions by smoothing techniques   \cite{Nadas_1985,gale1995good}, and the Laplace estimator
 \cite{laplace2012pierre,Braess_Sauer,Krichevsky_Trofimov},
 have been shown to be  useful  for the estimation of the probability of unseen elements
 \cite{Orlitsky427,Gale_Church_1994}, and have been implemented in many  practical applications. 
  More recently, the profile maximum likelihood (PML) approach has been suggested and analyzed \cite{acharya2017unified,orlitsky2004modeling,hao2019broad} as an alternative to the CML estimator. The PML estimator is near-optimal in the MSE sense for a uniform distribution and was shown to have impressive statistical properties.

On the theoretical side, 
various works analyzed the properties of
 specific estimators. For example,
some interpretations of the
Good-Turing estimator
and its performance in terms of attenuation have been established in \cite{Orlitsky427,NIPS2015_5762}.
A derivation of the Good-Turing estimator
 from a Bayesian point of view with a uniform prior  was suggested in  \cite{Good_1953}.
Works related to the Good-Turing estimator include analysis of  its bias \cite{Good_1953,Juang_Lo_1994},
confidence intervals and convergence rate \cite{mcallester2000convergence}, and (un)consistency \cite{Wagner_Kulkarni2011}.
The performance of the Good-Turing estimator was analyzed using the theory
of large deviations in  \cite{Budianu_Tong_2004}  and the pmf of its worst-case MSE  was discussed in
\cite{Acharya_2018}.

 Different performance bounds for the missing-mass estimation problem have been discussed in the literature.
For example,
lower and upper bounds on the expected missing mass  and
inequalities on the probability of large
deviations of  the missing mass are discussed in \cite{BEREND20121102,berend2013concentration}.
Various existing bounds are associated with the performance of a specific estimator and are distribution-free. For example, 
upper bounds on the MSE of Robbins-type estimators   have been proposed in \cite{YATRACOS1995321}.
Bounds on the worst-case MSE of the Good-Turing estimator have been developed in \cite{Acharya_2018,rajaraman2017minimax}.
On the other hand, other studies 
 provide  lower and upper bounds on the performance  of any estimator for specific distributions.
 For example, in \cite{Acharya_2018,rajaraman2017minimax} there are also
 lower and upper bounds on the minimax MSE of any estimator for specific distributions.
It should be noted that the proposed approach in this paper applies  to all algorithms/estimators and is based on evaluating the performance for each pmf value.
However, there are no  Cram$\acute{\text{e}}$r-Rao-type lower bounds on the averaged performance of any estimator of the missing mass. 
Our recent works on  estimation after selection
\cite{Routtenberg_Tong_est_after_sel,Meir_Routtenberg_journal,harel2019low,harel2021bayesian,Weiss_Routtenberg_Messer} suggest that conditional schemes,
in which the performance criterion depends on the observed data,
require different Cram$\acute{\text{e}}$r-Rao-type bounds for  analysis and system design.

\subsection{Organization and notation}
The remainder of the paper is organized as follows: Section
\ref{The_model} presents the  non-Bayesian parametric model  of missing-mass estimation under a multinomial model, including the conventional CCRB and constrained unbiasedness for this model and the appropriate cost function. In Section \ref{CCRBsec}, we derive a new  CCRB-type lower bound on the mmMSE. 
In Section \ref{scoring_section}, we describe the  missing-mass Fisher-scoring method.
Numerical  simulations are presented  in Section \ref{simulations_sec}. Finally, our conclusions
can be found in Section \ref{Conclusionsec}.

In the rest of this paper, vectors are denoted by boldface
lowercase letters and matrices  by boldface uppercase
letters.
The notations
${\mathbbm{1}}_{\{A\}}$ and $\Imat$ denote the indicator function of an
event $A$ and the identity matrix, respectively.
The vectors $\onevec $ and $\zerovec$   are column vectors of ones and zeros, respectively, and $\evec_m$  is the $m$th column of the identity matrix, all with appropriate dimensions. The matrix  $\diag(\avec)$ denotes the diagonal matrix with vector $\avec$ on the diagonal.
The $m$th element of the vector $\avec$,
	the $(m,q)$th element of the matrix $\Amat$,
	and the $(m_1:m_2\times q_1:q_2)$ submatrix of $\Amat$ 
	are denoted by $a_m$, $\Amat_{m,q}$, and 
	$\Amat_{m_1:m_2,q_1:q_2}$, respectively.
	The trace of a matrix $\Amat\in\mathbb{R}^{M\times M}$ is defined as
	${\text{trace}}(\Amat)=\sum_{m=1}^M \Amat_{m,m}$.
	The gradient of a vector function, $\cvec$, of $\thetavec$, $\nabla_{\thetavecsmall}\cvec$, is a matrix in $\mathbb{R}^{K\times M}$, with the $(k,m)$th element equal to $\frac{\partial c_k}{\partial \theta_m}$, where $\cvec=\left[c_1,\ldots,c_K\right]^T$ and $\thetavec=\left[\theta_1,\ldots,\theta_M\right]^T$.
For a scalar function $c$,
we denote
$\nabla_\thetavecsmall^{T}  c \define (\nabla_\thetavecsmall  c)^{T}$, and
$\nabla_\thetavecsmall^2  c \define\nabla_\thetavecsmall \nabla_\thetavecsmall^{ T}  c$.
The notations ${\rm{E}}_{\thetavecsmall} [\cdot]$ and ${\rm{E}}_{\thetavecsmall} [\cdot|A]$ represent the  expectation and conditional expectation operators,
 parametrized by a deterministic  vector, $\thetavec$,
and given the event $A$.
For a set $X$, $|X|$ represents its
cardinality. 
\section{Non-Bayesian  Estimation of the  
Missing Mass}
\label{The_model}
In this section, we present the problem of estimating the missing mass as a non-Bayesain parameter estimation problem. In Subsection \ref{model_sub} we describe the observation model and  the relevant probability functions. 
In Subsection \ref{CCRB_sec} we develop the $\chi$-unbiasedness and the CCRB  for estimating the unknown pmf under this model.
Finally, 
in Subsection \ref{cost_sub} we formulate the missing-mass estimation  problem, and present the missing-mass squared-error cost function, which is used  in this paper.

\subsection{Non-Bayesian model}
\label{model_sub}
Assume that there is a set of $M$ symbols, ${\cal{S}}=\left\{s_1,\ldots,s_M\right\}$, where
the alphabet size,
$M\geq 1$, is assumed to be finite and known. 
The elements in ${\cal{S}}$ may represent, for example,
 species in the jungle \cite{Orlitsky427},  words in a dictionary  \cite{Chen_Goodman_1996,katz1987estimation},
or  operating sensors {\cite{Budianu_Tong_2004,Budianu_Ben_David_Tong_2006}}.
The true  probability of observing symbol   $s_m$ is denoted by
$\theta_m$, $ m=1,\ldots,M$, where $\theta_m\neq 0$ for all $ m=1,\ldots,M$ and $\sum_{m=1}^{M} \theta_m = 1$.
Thus, 
 $ \thetavec\define  [\theta_1,\ldots,\theta_M]^T$ is a  pmf vector
over the discrete and finite set of symbols, ${\cal{S}}$.
As a result,  $\thetavec$  is  an element of the simplex $\Omega_\thetavecsmall$, i.e. $\thetavec\in\Omega_\thetavecsmall$,
where 
\be
\label{setworld}
\Omega_\thetavecsmall\define\left\{\thetavec\in \left[0,1\right]^M|f(\thetavec)=0\right\} ,
\ee
in which 
\be
\label{constraitfunc}
f(\thetavec) \define \sum_{m=1}^{M} \theta_m - 1.
\ee

In general constrained parameter estimation, the null-space matrix that is orthogonal to the constraints plays an important role \cite{Hero_constraint,Stoica_Ng_1998}.
For 
 the considered setting,  
the gradient of $f(\thetavec)$  w.r.t. $\thetavec$ is 
\be
\Fmat \define \nabla^T_\thetavecsmall f(\thetavec) = \onevec_M^T.
\ee
In addition, there exists a null-space matrix $\Umat \in \mathbb{R}^{M\times(M-1)}$ such that
\be
\label{Umat}
\Fmat \Umat =\onevec_M^T\Umat =  \zerovec^T ,~   \Umat^T\Umat=\Imat .
\ee
In particular, it can be verified that  $\Umat^T\yvec =\zerovec_{M-1} $ {\em{iff}} $\yvec = c\onevec_M$, where $c\in{\mathbb{R}}$ is an arbitrary constant.

Under the independent and identically distributed (i.i.d.) multinomial model  \cite{Good_1953}, it is assumed that there are
 $N$ i.i.d. samples, $\{x_n\}_{n=1}^N$, drawn according
to the  pmf described by the unknown vector, $\thetavec$.
We consider the problem of  estimating the missing mass of the unobserved symbols, which is a function of  $\thetavec$.
It can be verified that the  pmf 
of the random observation vector,
$\xvec\define[x_1,\ldots,x_N]^T\in{\cal{S}}^N$,  is a binomial distribution:
\beqna
\label{zero}
p(\xvec ;\thetavec)=\prod_{m=1}^M \theta_m^{C_{N,m}(\xvec)},~\xvec\in{\cal{S}}^N,
\eeqna
where 
\be
\label{Cnm_def}
C_{N,m}(\xvec) \define \sum_{n=1}^N {\mathbbm{1}}_{\{x_n=s_m\}},~m=1,\ldots,M,
\ee
is the number of times that the $m$th element was observed out of the $N$ samples.
Therefore, the vector $[C_{N,0}(\xvec),\ldots,C_{N,M}(\xvec)]^T$ has a multinomial distribution with parameters $N$ and $\thetavec$. 
The pmf in (\ref{zero}) can also be written as
\beqna
\label{one}
p(\xvec ;\thetavec)
=\prod_{m=1}^M \theta_m^{\sum_{r=0}^N r {\mathbbm{1}}_{\{m\in G_{N,r}(\xvec)\}}},
\eeqna
where
$G_{N,r}(\xvec)$ is  the set of elements that appear exactly $r$ times in  the $N$-length observation vector, $\xvec$. 
That is, if $m\in G_{N,r}(\xvec)$, then
$C_{N,m}(\xvec)=r$.
In particular, the set
\be
\label{GN0_def}
G_{N,0}(\xvec) \define \{m: m=1,\ldots,M, x_n\neq s_m,~\forall n=0,\ldots,N\}
\ee
is the set of elements that do not appear in the observation vector, $\xvec$,  with $C_{N,m}(\xvec)=0$.
 For example, upon observing  the  vector 
$\xvec=\left[a,c,c\right]^T$ with $N=3$ and
 ${\cal{S}}=\{a,b,c\}$,  the histogram values are $C_{3,1}(\xvec)=1$, $C_{3,2}(\xvec)=0$, and $C_{3,3}(\xvec)=2$
 according to \eqref{Cnm_def}, and
the missing mass is the pmf of $\{b\}$, since  according to \eqref{GN0_def} $G_{3,0}=\{b\}$.

Let us define the subspace of all observation vectors that do not include $s_m$ as 
\be
\label{Omega_def}
{\mathcal{A}}_m \define \{ \xvec\in{\cal{S}}^N : m \in G_{N,0}(\xvec)  \},~m=1,\ldots,M.
\ee
For a given number of measurements, $N$, the probability of the $m$th element being unobserved in these measurements is
\beqna
\label{probofkunseen}
\Pr( \xvec \in {\mathcal{A}}_m ;\thetavec) ={\rm{E}}_\thetavecsmall\left[{\mathbbm{1}}_{\left\{m\in G_{N,0}(\xvec)\right\}}\right]= (1 - \theta_m)^N,
\eeqna
$\forall m=1,\ldots,M$.
By using Bayes rule it can be seen that 
\beqna
\label{Bayes_2}
p(\xvec| \xvec \in {\mathcal{A}}_m  ;\thetavec)\hspace{-0.1cm}=\hspace{-0.1cm}
\left\{\begin{array}{lr}
\frac{p(\xvec;\thetavecsmall)}{\Pr( \xvec \in {\mathcal{A}}_m ;\thetavecsmall)}& {\text{if }} \xvec \in {\mathcal{A}}_m\\
0& {\text{otherwise}}
\end{array}\right.\hspace{-0.25cm},\hspace{-0.25cm}
\eeqna
$ m =1,\ldots, M $.
By substituting (\ref{zero}) and (\ref{probofkunseen}) in (\ref{Bayes_2}), we obtain
\beqna
\label{bayesmm}
p(\xvec | \xvec \in {\mathcal{A}}_m  ;\thetavec)
=
\left\{\begin{array}{lr}
\frac
{\prod_{l=1}^M \theta_l^{C_{N,l}(\xvec)}}{(1 - \theta_m)^N}
& {\text{if }} \xvec \in {\mathcal{A}}_m\\
0& {\text{otherwise}}
\end{array}\right.,
\eeqna
$m =1,\ldots, M $.

We denote by 
$\hat{\thetavec}:{\cal{S}}^N \rightarrow \Omega_\thetavecsmall $  an arbitrary estimator of the pmf vector, $\thetavec$, based on the observation vector, $\xvec$. 
The CML estimator of  $\thetavec$ under the parametric  constraint $f(\thetavec)=0$ from \eqref{constraitfunc} is given by
\be
\label{CML}
\hat{\theta}_m^{\text{CML}} =\frac{ C_{N,m}(\xvec)}{N}, ~m=1,\ldots,M.
\ee
In particular, the CML estimator assigns zero probability for unseen elements, i.e. for the missing mass.
Some alternative estimators are presented in Subsection \ref{special_cases_subsection}.

\subsection{CCRB and constrained unbiasedness}
\label{CCRB_sec}
In this subsection, we develop the conventional CCRB and the  unbiasedness condition for estimating $\thetavec$ under the considered model. 
The CCRB \cite{Hero_constraint,Stoica_Ng_1998}
provides a lower bound on the MSE of any locally $\chi$-unbiased estimator \cite{Nitzan_Routtenberg_Tabrikian2019,Nitzan_Routtenberg_Tabrikian_letter,sparse_con}, which is a weaker requirement than ordinary mean unbiasedness, and is defined as follows.
\begin{definition}
\label{def_chi_unbiasedness}
An estimator
$\hat{\thetavec}:{\cal{S}}^N\rightarrow\Omega_\thetavecsmall$
is said to be  a locally $\chi$-unbiased estimator 
 in the neighborhood of
$\tilde{\thetavec}\in \Omega_\thetavecsmall$  if it satisfies 
\beqna
\label{bias_MSE}
\Umat^T {\rm{E}}_{\tilde{\thetavecsmall}}
[ \hat{\thetavec} - \tilde{\thetavec}
] = \zerovec_{M-1} 
\eeqna
and
\be
\label{bias_MSE_derivative}
\left.\left\{\nabla_\thetavecsmall^T {\rm{E}}_\thetavecsmall
[ \hat{\thetavec} - \thetavec
]\right\}\right|_{\thetavecsmall=\tilde{\thetavecsmall}} \Umat = \zerovec_{M\times (M-1)} ,
\ee
where $\Umat$ is defined in (\ref{Umat}).
\end{definition}
It should be noted that 
 in this paper
the notation $\tilde{\thetavec}$ represents  a specific value (or ``local" value) of the unknown parameter vector in the simplex $\Omega_\thetavecsmall$, while $\thetavec$ is used as a general parameter in the different functions. 
For the CML estimator in \eqref{CML} we obtain that
\be
\label{zero_cml}
{\rm{E}}_\thetavecsmall
\left[\hat{\theta}^{\text{CML}}_m -\theta_m \right] =
{\rm{E}}_\thetavecsmall
\left[\frac{C_{N,m}(\xvec)}{N} -\theta_m \right] = 0,
\ee
for all $m=1,\dots,M$ and for any $\thetavec\in\Omega_\thetavecsmall$,
where the last equality follows from the mean of a variable  with a multinomial distribution.
Thus, \eqref{zero_cml} implies that the CML estimator satisfies Definition \ref{def_chi_unbiasedness} and that it is a locally $\chi$-unbiased estimator for any $\thetavec\in \Omega_\thetavecsmall$; thus,  it is a uniformly $\chi$-unbiased estimator. The  CML estimator is also 
 a  C-unbiased estimator  in the Lehmann sense  \cite{Nitzan_Routtenberg_Tabrikian2019,Lehmann}, since  the constraint $\thetavec\in\Omega_\thetavecsmall$ is  linear.

The CCRB on the MSE of any unbiased estimator in the sense of  Definition \ref{def_chi_unbiasedness} at $\tilde{\thetavec}\in\Omega_\thetavecsmall$ is given by 
\cite{Stoica_Ng_1998,Nitzan_Routtenberg_Tabrikian2019,Nitzan_Routtenberg_Tabrikian_letter,sparse_con}
\be
\label{CCRB_MSE_J}
{\rm{E}}_{\tilde{\thetavecsmall}} \left[
(\hat{\thetavec}-\tilde{\thetavec})
(\hat{\thetavec}-\tilde{\thetavec})^T
\right]
\succeq
\Umat(\Umat^T\Jmat(\tilde{\thetavec})\Umat)^{-1} \Umat^T,
\ee
where the conventional Fisher information matrix (FIM)  is 
\be
\label{Jmat_MSE}
\Jmat(\thetavec) = {\rm{E}}_\thetavecsmall
\left[ \nabla_\thetavecsmall \log p(\xvec;\thetavec) 
\nabla_\thetavecsmall^T \log p(\xvec;\thetavec)
\right],~\thetavec\in\Omega_\thetavecsmall.
\ee
In Appendix \ref{new_app_CCRB} it is shown that 
the CCRB on the trace MSE under the considered model is given by
\beqna
\label{final_CCRB1}
\sum_{m=1}^M {\rm{E}}_{\tilde{\thetavecsmall}} \left[ (\hat{\theta}_m -\tilde{\theta}_m)^2 \right]
\geq B^{\text{CCRB}} (\tilde{\thetavec}),
\eeqna
where 
\beqna
\label{final_CCRB}
B^{\text{CCRB}} (\thetavec) \define
\frac{1}{N} {\text{trace}}\left( \left(\Umat^T \left( \diag( \thetavec)
\right)^{-1} \Umat
\right)^{-1} \right).
\eeqna
However, missing-mass estimators, such as the Good-Turing and add-constant estimators, are $\chi$-biased. Thus, the performance of
these estimators should be assessed by 
 the biased CCRB \cite{ben2009constrained}, which is a function of the estimator's  bias gradient.
Moreover, the  CCRB in \eqref{final_CCRB} is a lower bound on the MSE of  estimators of the entire pmf vector, and does not provide a relevant bound on the  performance for the missing-mass estimation problem.
	This is  similar to the mismatch of the 
		naive CML estimator from \eqref{CML}, which
	tends to overestimate the probability of the observed elements.
	In the following section, we develop a new CCRB-type bound on the missing-mass estimation.

\subsection{Non-Bayesian paradigm and mmMSE risk}
\label{cost_sub}
In this subsection, we explain the rationale behind the considered approach, which is: 1) purely non-Bayesian estimation of a {\em{deterministic}} parameter; 2) an estimation of a parameter of interest in the presence of nuisance parameters that affect the accuracy of estimation; 3) based on the estimation of the entire pmf, $\thetavec$; and 4)  based on the 
mmMSE risk.

 The missing mass,
  namely the total probability mass  of the outcomes not observed in the samples  in  
  $\xvec$, is defined as
\be
\label{pr0}
p_{0}(\xvec,\thetavec)=\sum_{m=1}^M \theta_m {\mathbbm{1}}_{\{m\in G_{N,0}(\xvec)\}}.
\ee
The missing mass in \eqref{pr0} is a hybrid (mixture of random and deterministic) scalar parameter, which is a function
of both the {\em{deterministic}} pmf vector, $\thetavec$, and the {\em{random}} observation vector, $\xvec$. 
Thus, various papers in the literature (see, e.g. \cite{Acharya_2018,cohen1990admissibility}) treat the estimation problem as 
the estimation of the {\em{hybrid}} parameter, $p_{0
}(\xvec,\thetavec)$,
 which allegedly has both random and deterministic parts. However, since the random observation vector, $\xvec$, is known, the true unknown part in  $p_{0}(\xvec,\thetavec)$ is only the deterministic vector, $\thetavec$. Therefore,
 in this work we adopt the  
 non-Bayesian approach for the estimation of  {\em{deterministic}} parameters. 
 Moreover, since all the elements of  the  pmf vector, $\thetavec$, are unknown,
 we treat this estimation problem as the estimation of the  parameters of interest in \eqref{pr0} that include  the probabilities of unseen events, and refer to the other (seen) parameters in $\thetavec$ as nuisance parameters \cite{gini1996estimation,Bar_Tabrikian_PartI}.

Direct calculation of the MSE of $p_{0}(\xvec,\thetavec)$ from \eqref{pr0},
\[{\rm{E}}_\thetavecsmall\left[\left
(\sum_{m=1}^M \hat{\theta}_m {\mathbbm{1}}_{\{m\in G_{N,0}(\xvec)\}}-\sum_{m=1}^M \theta_m {\mathbbm{1}}_{\{m\in G_{N,0}(\xvec)\}}\right)^2\right],\]
requires the calculation of all
 cross-correlations of 
 estimation errors of any $\theta_m$ and $\theta_l$, $l,m\in G_{N,0}(\xvec)$.
This, in turn, requires
 computing the expectation
 of a sum of $2^M$  possible events (that represent the binary options that $m$ and/or $l$ is within/without $G_{N,0}(\xvec)$, for any $\xvec$ and any $m,l=1,\ldots,M$).
While this approach is feasible for calculating the MSE of specific missing-mass estimators (see, e.g. in \cite{Acharya_2018,rajaraman2017minimax}),  we found it 
infeasible for the calculation of the associated modified FIM and  unbiasedness, which leads to an intractable bound.

In order to
capture the relevant 
errors both meaningfully and in a way that can be easily computed,
we use here an alternative cost function, which is based on the missing-mass squared-error cost function:
\be
\label{cost_use}
C(\hat{\thetavec},\thetavec)\define \sum_{m=1}^{M} (\hat{\theta}_m - \theta_m )^2
{\mathbbm{1}}_{\{m\in G_{N,0}(\xvec) \} }  ,
\ee 
for any estimator $\hat{\thetavec}=[\hat{\theta}_1,\ldots,\hat{\theta}_M]^T$   of the pmf vector, $\thetavec=[{\theta}_1,\ldots,{\theta}_M]^T$.
The associated mmMSE
risk, which is the expected value of (\ref{cost_use}), is 
\beqna
\label{law}
{\rm{E}}_\thetavecsmall\left[C(\hat{\thetavec},\thetavec)\right] =
\sum_{m=1}^{M}{\rm{E}}_\thetavecsmall\left[ (\hat{\theta}_m - \theta_m )^2
{\mathbbm{1}}_{\{m\in G_{N,0}(\xvec) \} } \right]
\nonumber\\
= \sum_{m=1}^{M}
{\rm{E}}_\thetavecsmall\left[(\hat{\theta}_m -\theta_m)^2 | \xvec \in {\mathcal{A}}_m  \right]
\Pr(\xvec \in {\mathcal{A}}_m ;\thetavec),
\eeqna
where the last equality is obtained by using  the law of total probability and the conditional distribution from \eqref{Bayes_2}.

It can be seen that 
 in order to evaluate the 
mmMSE
 performance of the  estimator of $p_{0}(\xvec,\thetavec)$ over all possible observation vectors $\xvec$, we need to sum over the errors of all the elements of $\thetavec$ (i.e. to compute $M$ terms). This is a significant 
 reduction in computational cost compared with  the direct calculation of the MSE that requires computing the expectation over $2^M$  possible events.
In addition, since we assume that $M$ is known and finite, the sum in \eqref{cost_use} is finite.
This cost takes into account all possible estimation errors by summing over all the errors 
 in a similar manner to existing different bounds  on various cost functions  (see, e.g. in \cite{Acharya_2018,rajaraman2017minimax}) and to performance evaluation of specific estimators (see, e.g. in \cite{Juang_Lo_1994,pavlichin2019approximate}).
The use of the indicator functions in  the missing-mass squared-error cost function in \eqref{cost_use} implies that  
the error of the $m$th parameter, $\hat{\theta}_m-\theta_m$, affects the mmMSE
only for  observations $\xvec$ such that $s_m$ has not been observed. Thus, it can be seen that  the mmMSE
is the sum of the  MSEs of the parameters of interest, i.e. only the estimation
errors of   elements with the indices 
that are in $G_{N,0}(\xvec)$ from \eqref{GN0_def}.
 It should be noted that other parametric statistical analyses of the missing-mass estimation in the literature are based on the estimation of the entire pmf  vector, $\thetavec$. For example, in \cite{Juang_Lo_1994}, the full estimator of  $\thetavec$ is used to analyze the bias of missing-mass estimators. Similarly, in the development of the minimax bound in \cite{rajaraman2017minimax}, a tight bound is derived by replacing the problem of missing-mass estimation with that of distribution estimation.  Further,  some missing-mass estimators are based on estimating $\thetavec$ and then applying different smoothing approaches to obtain the estimator of the missing mass (see, e.g. \cite{Orlitsky427}).
Finally, it should be noted that the mmMSE is computed where
 the expectation is only over the randomness in the
 estimator, since the pmf is a deterministic vector in the considered model.

\section{Missing-Mass Constrained Cram$\acute{\text{e}}$r-Rao (mmCCRB) Bound}
\label{CCRBsec}
In this section, a CCRB-type lower bound is derived. 
In Subsection \ref{L_unbias_sec} we develop the uniform and local unbiasedness in the Lehmann sense under the missing-mass squared-error cost function and under the probability-space parametric constraints. In Subsection \ref{CRB_section}, we derive 
the 
 proposed bound, which is a lower
bound on the mmMSE and is a function of the Lehmann bias of the estimators.
  For the sake of generality, the  unbiasedness and the mmCCRB are first  derived for a general observation-model distribution, $p(\xvec;\thetavec)$.
  Thus, the missing-mass unbiasedness in  Subsection \ref{L_unbias_sec} and the  mmCCRB in  Subsection \ref{CRB_section} can be used for various variations of the missing-mass estimation problem, such as estimating an unknown Markov chain
from its sample \cite{berend2013concentration,hao2018learning,skorski2020missing}. 
  Then, in Subsection \ref{explicit}, we develop the closed-form mmCCRB for the classical i.i.d.  model, given in \eqref{zero},
 as well as the mmCCRB for missing-mass unbiased estimators.
Finally, in Subsection \ref{special_cases_subsection} we present some special cases of the  mmCCRB.
\subsection{Lehmann unbiasedness}
\label{L_unbias_sec}
The mean-unbiasedness constraint is   commonly used in non-Bayesian parameter estimation \cite{Kay_estimation}.
Lehmann \cite{Lehmann}  proposed a generalization of the  unbiasedness concept, which is based on the
considered cost function,  as follows.
	\begin{definition}
An  estimator 
	$\hat{\thetavec}:{\cal{S}}^N\rightarrow \mathbb{R}^M$  is  an unbiased estimator of $\thetavec$ in the Lehmann sense \cite{Lehmann} w.r.t. a given cost function, $C(\hat{\thetavec},\thetavec)$,
	if 
	\be
	\label{defdef}
	{\rm{E}}_{\thetavecsmall}[C(\hat{\thetavec},\etavec) ] \geq {\rm{E}}_{\thetavecsmall}[C(\hat{\thetavec},\thetavec)],~\forall \etavec,\thetavec\in\Omega_\thetavecsmall,
	\ee
	where the simplex  $\Omega_\thetavecsmall$ is the parameter space.
		\end{definition}
The Lehmann unbiasedness definition implies that an estimator is unbiased  if, on average, 
 it is ``closer'' to  the true parameter $\thetavec$
  than to any other value in the parameter space (here, denoted by an arbitrary vector, $\etavec$).
The measure of closeness  is determined by  the considered cost function,
 $C(\hat{\thetavec},\thetavec)$. 
Examples for Lehmann unbiasedness with different cost functions and under parametric constraints can be found in 
\cite{Nitzan_Routtenberg_Tabrikian2019,Routtenberg_Tong_est_after_sel,Meir_Routtenberg_journal,Lehmann,PCRB_J,Routtenberg_cyclic}.
The following lemma states the Lehmann unbiasedness for the estimation problem of missing-mass probability.
To this end, we define the elements of the missing-mass bias vector, $\bvec_{N,0}(\thetavec)\in\mathbb{R}^M$,  as follows:
\beqna
\label{bvec}
\left[\bvec_{N,0}(\thetavec) \right]_m \define {\rm{E}}_\thetavecsmall
\left[ (\hat{\theta}_m - \theta_m) 
{\mathbbm{1}}_{\{m\in G_{N,0}(\xvec) \} }
\right]\hspace{1.35cm}
\nonumber\\
={\rm{E}}_\thetavecsmall \left[ \hat{\theta}_m - \theta_m | \xvec \in {\mathcal{A}}_m  \right] \Pr ( \xvec \in {\mathcal{A}}_m ;\thetavec),
\eeqna
$\forall m=1,\ldots,M$, where the last equality is obtained by using 
 the law of total probability. 
\begin{lemma}
\label{unbiasedness_prop}
An estimator
$\hat{\thetavec}:{\cal{S}}^N\rightarrow\Omega_\thetavecsmall$
is  said to be   a {\em{uniformly}} Lehmann-unbiased estimator of $\thetavec\in \Omega_\thetavecsmall$  w.r.t. the missing-mass squared-error
cost function from (\ref{cost_use}) if
\beqna
\label{unbiasedness_cond}
\Umat^T \bvec_{N,0}(\thetavec) = \zerovec_{M-1},~\forall \thetavec\in \Omega_{\thetavecsmall},
\eeqna 
where $\Umat$   and $\bvec_{N,0}(\thetavec)$ are
defined in  \eqref{Umat} and \eqref{bvec}, respectively.
\end{lemma}
\begin{proof} The proof appears in Appendix \ref{unbiasApp}.
\end{proof}
The CCRB is a {\em{local}} bound, meaning
that it determines the achievable performance at a particular
value of $\thetavec$,  denoted here by $\tilde{\thetavec}$, based on the statistics in its neighborhood.  Similar to the local $\chi$-unbiasedness in Definition \ref{def_chi_unbiasedness}, we can define the {\em{local}}
missing-mass unbiasedness as follows.
\begin{definition}
\label{def_local_mmunbiasedness}
An estimator
$\hat{\thetavec}:{\cal{S}}^N\rightarrow\Omega_\thetavecsmall$
is said to be  a  {\em{locally}} Lehmann-unbiased estimator  \cite{Lehmann}  in the neighborhood of
$\tilde{\thetavec}\in \Omega_\thetavecsmall$ w.r.t. the missing-mass squared-error
cost function from (\ref{cost_use}) if it satisfies 
\beqna
\label{unbiasedness_cond_local}
\Umat^T \bvec_{N,0}(\tilde{\thetavec}) = \zerovec_{M-1}
\eeqna
and
\be
\label{mm_bias_MSE_derivative}
\left.\left\{\nabla_\thetavecsmall^T \bvec_{N,0}(\thetavec)\right\}\right|_{\thetavecsmall=\tilde{\thetavecsmall}}  \Umat = \zerovec_{M\times (M-1)} .
\ee
\end{definition}

It should be noted that the condition in \eqref{unbiasedness_cond} requires a {\em{uniform}} unbiasedness, for any $\thetavec\in\Omega_\thetavecsmall$, while the conditions in \eqref{unbiasedness_cond_local} and \eqref{mm_bias_MSE_derivative} are {\em{local}} conditions that are required to be satisfied only at the specific $\thetavec$, denoted here by $\tilde{\thetavec}$, for which the bound is developed. 
Both the local and the uniform missing-mass unbiasedness definitions  restrict only the values that belong to the set of unseen symbols, i.e. elements that belong to the set $G_{N,0}(\xvec)$ from \eqref{GN0_def}, in ${\cal{S}}$ to  be unbiased.
In addition, 
by comparing Definition \ref{def_chi_unbiasedness} and  Definition \ref{def_local_mmunbiasedness} it can be seen that the differences between the local
 $\chi$-unbiasedness 
and the local missing-mass unbiasedness follow from the difference of the cost functions, where both definitions use the null-space matrix, $\Umat$, which is due to the parametric constraint.
However, while there exist estimators that are $\chi$-unbiased, such as the CML estimator as shown in \eqref{zero_cml},  there are no estimators that are missing-mass unbiased in the non-asymptotic region, without splitting the data or taking  extra draws \cite{cohen1990admissibility,lo1992species,book_engen}. It is  known that even when unbiased methods do not exist in a particular
setting, such as in the case of  various nonlinear models, meaningful biased techniques with good performance can still be found \cite{ben2009constrained,Nitzan_ICASSP,yonina_penalized,kay2008rethinking}.

The uniform  missing-mass unbiasedness from (\ref{unbiasedness_cond}) can be interpreted as follows. From the definition of $\Umat$ in \eqref{Umat}, it can be verified that  $\Umat^T\yvec =\zerovec_{M-1} $ {\em{iff}} $\yvec = c\onevec_M$, where $c\in{\mathbb{R}}$ is an arbitrary constant. Thus, the condition in \eqref{unbiasedness_cond}  implies that for a uniformly Lehmann unbiased estimator, the missing-mass bias vector satisfies
\be
\label{bequal}
\bvec_{N,0}(\thetavec)=
\beta_{N,0}(\thetavec)\onevec_M,~\forall \thetavec\in\Omega_\thetavecsmall,
\ee
where  $\beta_{N,0}(\thetavec)\in{\mathbb{R}}$ is a constant. 
The condition in \eqref{bequal} is that the  $m$th element of the missing-mass bias is identical for any $m$.   This property recalls the notion of  natural estimators \cite{NIPS2015_5762}, since it assigns the same bias requirements to all symbols appearing with the same probability.

\subsection{mmCCRB}
\label{CRB_section}
Lower bounds on the mmMSE are useful for performance
analysis and system design.
In this subsection, a constrained Cram$\acute{\text{e}}$r-Rao-type
lower bound on the mmMSE from \eqref{law} is derived.
The new bound is based on the missing-mass bias in the Lehmann sense, as defined in Subsection \ref{L_unbias_sec}.
Thus, it is a bound on
the MSE of the missing mass of all estimators having a given bias function, $\bvec_{N,0}(\tilde{\thetavec})$, at
each point, $\tilde{\thetavec}\in\Omega_\thetavecsmall$.

Let us define the following 
   {\em{missing-mass Fisher information matrix (mmFIM)}} :
\beqna
\label{JJJdef}
\Jmat^{(0)}(\thetavec)\define
{\rm{E}}_\thetavecsmall\left[ \Deltamat(\xvec,\thetavec)\Deltamat^T(\xvec,\thetavec) \right],
\eeqna
	$\forall \thetavec\in\Omega_\thetavecsmall$,
where the $m$th column  of the matrix $\Deltamat(\xvec,\thetavec)\in{\mathbb{R}}^{M\times M}$  is  defined as
\beqna
\label{delta}
\Deltamat_{1:M,m}(\xvec,\thetavec) \define \nabla_\thetavecsmall \log p(\xvec|\xvec \in {\mathcal{A}}_m  ;\thetavec){\mathbbm{1}}_{\{ \xvec \in {\mathcal{A}}_m  \}},
\eeqna
 $m=1,\ldots,M$.
In addition, we define the auxiliary  matrix $\Smat(\thetavec)\in{\mathbb{R}}^{M\times M}$, in which the $m$th row  is defined as
	\beqna
	\label{Smat}
	\Smat_{m,1:M}(\thetavec) \hspace{5.75cm}
	\nonumber\\\define\bigg( \nabla^T_\thetavecsmall \bigg\{\frac{\left[\bvec_{N,0}(\thetavec)\right]_m}{\Pr(\xvec \in {\mathcal{A}}_m ;\thetavec)}
	\bigg\} + \evec_m^T \bigg) \Pr(\xvec \in {\mathcal{A}}_m ;\thetavec),
	\eeqna
	$\forall \thetavec\in\Omega_\thetavecsmall$.

We define the following  regularity condition:
\renewcommand{\theenumi}{C.\arabic{enumi}} 
\begin{enumerate}
\item
\label{cond1}
The likelihood gradient vector,
$\Deltamat_{1:M,m}(\xvec,\thetavec)$, defined in \eqref{delta},
 exists and is finite 
 $\forall\thetavec\in \Omega_\thetavecsmall$
and $\forall   m =1,\ldots, M$.
That is, 
the matrix $\Umat^T\Jmat^{(0)}(\thetavec)\Umat$
is a well-defined,  non-singular, and non-zero matrix
for any $ \thetavec\in \Omega_\thetavecsmall$.
\end{enumerate} 
\renewcommand{\theenumi}{\arabic{enumi}}

\begin{theorem}
	\label{theoremCRB}
Let the regularity condition \ref{cond1} be satisfied and $\hat{\thetavec}$ be an estimator of $\thetavec\in\Omega_\thetavecsmall$ with a local missing-mass bias vector in the neighborhood of $\tilde{\thetavec}\in\Omega_\thetavecsmall$ given by $\bvec_{N,0}(\tilde{\thetavec})$,  as defined
in \eqref{bvec}.
Then, the mmMSE from \eqref{law} satisfies
\be
\label{TheBound}
 {\rm{E}}_{\tilde{\thetavecsmall}}\left[C (\hat{\thetavec},\tilde{\thetavec})
 \right] \geq B^{\text{mmCCRB}} (\tilde{\thetavec}),
\ee
where  the mmCCRB evaluated at the local point, $\tilde{\thetavec}$, is 
\beqna
\label{CRB}
 B^{\text{mmCCRB}} (\tilde{\thetavec})
\define 
{\text{trace}}\left(\Smat^T(\tilde{\thetavec})\Umat(\Umat^T\Jmat^{(0)}(\tilde{\thetavec})\Umat)^{-1}\Umat^T\Smat(\tilde{\thetavec})\right)\hspace{-0.3cm}
\nonumber\\
+ \sum_{m=1}^M \frac{\left[\bvec_{N,0}(\tilde{\thetavec})\right]_m^2}{\Pr(\xvec \in {\mathcal{A}}_m ;\tilde{\thetavec})}.
\hspace{2.5cm}
\eeqna
Moreover, equality is achieved in \eqref{CRB} if
\beqna
\label{CRB_equality}
\hat{\theta}_m-\tilde{\theta}_m = \frac{[\bvec_{N,0}(\tilde{\thetavec})]_m}{\Pr (\xvec \in {\mathcal{A}}_m ;\tilde{\thetavec})}   \hspace{3.25cm}
\nonumber\\
+ \left[\Smat^T(\tilde{\thetavec})\Umat(\Umat^T\Jmat^{(0)}(\tilde{\thetavec})\Umat)^{-1} \Umat^T\Deltamat(\xvec,\tilde{\thetavec}) \right]_{m,m},
\eeqna
for any $ m=1,\ldots,M$ such that $m\in G_{N,0}(\xvec)$.
\end{theorem}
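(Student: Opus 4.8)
The plan is to split the mmMSE risk of \eqref{law} (evaluated at $\thetavec=\tilde{\thetavec}$) into a conditional--variance part and a conditional--bias part, to recognize the bias part as the second summand of $B^{\text{mmCCRB}}(\tilde{\thetavec})$ in \eqref{CRB}, and to lower--bound the variance part by a matrix covariance (Schur--complement) inequality built from the missing--mass score matrix $\Deltamat$. I would begin with the decomposition: using ${\mathbf{1}}_{\{\cdot\}}^2={\mathbf{1}}_{\{\cdot\}}$, the law of total probability, the conditional bias--variance identity, and \eqref{bvec}, for each $m$
\[
{\rm{E}}_{\tilde{\thetavecsmall}}\!\left[(\hat{\theta}_m-\tilde{\theta}_m)^2 {\mathbf{1}}_{\{m\in G_{N,0}(\xvec)\}}\right]=\Pr(\xvec\in{\mathcal{A}}_m;\tilde{\thetavec})\,{\rm{Var}}_{\tilde{\thetavecsmall}}\!\left(\hat{\theta}_m\,|\,\xvec\in{\mathcal{A}}_m\right)+\frac{\left[\bvec_{N,0}(\tilde{\thetavec})\right]_m^2}{\Pr(\xvec\in{\mathcal{A}}_m;\tilde{\thetavec})}.
\]
Summing over $m$, the last terms reproduce the second summand of \eqref{CRB}, so it remains to bound $\sum_m\Pr(\xvec\in{\mathcal{A}}_m;\tilde{\thetavec})\,{\rm{Var}}_{\tilde{\thetavecsmall}}(\hat{\theta}_m\,|\,\xvec\in{\mathcal{A}}_m)$ from below by the trace term. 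Note that no unbiasedness is assumed; the bound is carried entirely through $\bvec_{N,0}(\tilde{\thetavec})$ and its gradient (encoded in $\Smat(\tilde{\thetavec})$).

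Next I would introduce the diagonal random matrix $\Psimat(\xvec)\define\diag(\psi_1,\dots,\psi_M)$ with $\psi_m\define\big(\hat{\theta}_m-{\rm{E}}_{\tilde{\thetavecsmall}}[\hat{\theta}_m\,|\,\xvec\in{\mathcal{A}}_m]\big){\mathbf{1}}_{\{m\in G_{N,0}(\xvec)\}}$, so that ${\rm{E}}_{\tilde{\thetavecsmall}}[\psi_m]=0$ and ${\text{trace}}\,{\rm{E}}_{\tilde{\thetavecsmall}}[\Psimat\Psimat^T]$ equals the conditional--variance sum just isolated. The two moment facts needed are ${\rm{E}}_{\tilde{\thetavecsmall}}[\Deltamat(\xvec,\tilde{\thetavec})]=\zerovec$ columnwise --- since $\nabla_{\thetavecsmall}\log p(\xvec\,|\,\xvec\in{\mathcal{A}}_m;\thetavec)$ has zero mean under $p(\xvec\,|\,\xvec\in{\mathcal{A}}_m;\thetavec)$, the conditioning set ${\mathcal{A}}_m$ being $\thetavec$--independent --- and the cross--moment identity
\[
{\rm{E}}_{\tilde{\thetavecsmall}}\!\left[\Psimat(\xvec)\,\Deltamat^T(\xvec,\tilde{\thetavec})\right]=\Smat(\tilde{\thetavec}),
\]
which is essentially Lemma \ref{Smat_theorem}: entrywise, ${\rm{E}}_{\tilde{\thetavecsmall}}[\psi_m\,\Deltamat_{l,m}(\xvec,\tilde{\thetavec})]=\Pr(\xvec\in{\mathcal{A}}_m;\tilde{\thetavec})\left.\partial_{\theta_l}{\rm{E}}_{\thetavecsmall}[\hat{\theta}_m\,|\,\xvec\in{\mathcal{A}}_m]\right|_{\tilde{\thetavecsmall}}=\Smat_{m,l}(\tilde{\thetavec})$, where the middle step is the conditional--distribution score identity applied to $\hat{\theta}_m$ (the additive constant ${\rm{E}}_{\tilde{\thetavecsmall}}[\hat{\theta}_m\,|\,\xvec\in{\mathcal{A}}_m]$ is $\thetavec$--free and drops out) and the support argument legitimizes differentiating under the (finite) sum.

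I would then invoke positive semidefiniteness of the joint second--moment matrix of $\Psimat$ and $\Umat^T\Deltamat(\xvec,\tilde{\thetavec})$, whose diagonal blocks are ${\rm{E}}_{\tilde{\thetavecsmall}}[\Psimat\Psimat^T]$ and ${\rm{E}}_{\tilde{\thetavecsmall}}[(\Umat^T\Deltamat)(\Umat^T\Deltamat)^T]=\Umat^T\Jmat^{(0)}(\tilde{\thetavec})\Umat$ (non--singular by condition \ref{cond1}) and whose off--diagonal block is ${\rm{E}}_{\tilde{\thetavecsmall}}[\Psimat\Deltamat^T]\Umat=\Smat(\tilde{\thetavec})\Umat$. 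The Schur complement with respect to the $\Umat^T\Jmat^{(0)}\Umat$ block gives
\[
{\rm{E}}_{\tilde{\thetavecsmall}}[\Psimat\Psimat^T]\succeq\Smat(\tilde{\thetavec})\Umat\big(\Umat^T\Jmat^{(0)}(\tilde{\thetavec})\Umat\big)^{-1}\Umat^T\Smat^T(\tilde{\thetavec}),
\]
whose trace is precisely the first summand of \eqref{CRB}; adding the bias part from the first paragraph yields \eqref{TheBound}. For the equality claim I would trace back through the Schur step: equality forces $\Psimat(\xvec)=\Smat(\tilde{\thetavec})\Umat(\Umat^T\Jmat^{(0)}(\tilde{\thetavec})\Umat)^{-1}\Umat^T\Deltamat(\xvec,\tilde{\thetavec})$ almost surely, and reading the $(m,m)$ entry on $\{m\in G_{N,0}(\xvec)\}$ together with $\psi_m=\hat{\theta}_m-\tilde{\theta}_m-[\bvec_{N,0}(\tilde{\thetavec})]_m/\Pr(\xvec\in{\mathcal{A}}_m;\tilde{\thetavec})$ reproduces \eqref{CRB_equality}.

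The main obstacle I anticipate is the cross--moment identity: each coordinate $\hat{\theta}_m$ must be treated as an estimator under its \emph{own} conditional model $p(\xvec\,|\,\xvec\in{\mathcal{A}}_m;\thetavec)$ from \eqref{bayesmm}, one must keep the normalizers $\Pr(\xvec\in{\mathcal{A}}_m;\thetavec)$ and the indicator functions in their right places when passing between unconditional and conditional expectations, and align the matrix shapes (and transposes) so that the trace collapses exactly to $B^{\text{mmCCRB}}(\tilde{\thetavec})$ rather than to a mismatched contraction. Once ${\rm{E}}_{\tilde{\thetavecsmall}}[\Psimat\Deltamat^T]=\Smat$ is in hand, the remainder is the standard constrained Cram\'er--Rao argument using that $\Umat$ spans the tangent space of the constraint $f(\thetavec)=0$.
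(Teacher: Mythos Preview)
Your proposal is correct and follows essentially the same route as the paper's Appendix~\ref{proofCRB}: the bias--variance split of the mmMSE matches the paper's equation \eqref{connection}, your $\Psimat$ is the paper's $\Gammamat$ from \eqref{Gamma_def}, your cross--moment identity ${\rm{E}}_{\tilde{\thetavecsmall}}[\Psimat\Deltamat^T]=\Smat$ is exactly Lemma~\ref{lemma1} (not Lemma~\ref{Smat_theorem}, which only rewrites $\Smat$ in a tractable form), and your Schur--complement step is equivalent to the paper's optimize--over--$\Wmat$ covariance inequality \eqref{inequal}--\eqref{59}. The equality argument is also the same as in Subsection~\ref{equality_cond_proof}.
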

\begin{proof} The proof appears in Appendix \ref{proofCRB}.
\end{proof}
Theorem \ref{theoremCRB} provides a lower bound on the MSE of  missing-mass estimators  that have a specified bias function, $\bvec_{N,0}(\tilde{\thetavec})$. This is similar to  MSE bounds on biased estimators in the general setting \cite{yonina_penalized,kay2008rethinking}.
Biased bounds can be used to explore the fundamental tradeoff  between bias and variance,   as well as  for system design.
The specification of the biased mmCCRB
requires an {\em{a-priori}} choice of the bias gradient. The biased mmCCRB can be used for cases
where we consider an estimator with a tractable  bias gradient.
For the case of the i.i.d. model, we show in Lemma \ref{Smat_theorem} that the biased mmCCRB can be computed without the need for a bias gradient, with simple expectation terms that make the biased mmCCRB more tractable.
It should be noted that in the following we use the same notation,
$ B^{\text{mmCCRB}} (\tilde{\thetavec})$, for the mmCCRB with different bias specifications.

It can be seen that the equality condition in (\ref{CRB_equality}), which is  the requirement  for the achievability of the mmCCRB, only determines the values of the missing-mass estimation errors.
In addition, it can be seen that  the estimator defined in \eqref{CRB_equality}, $\hat{\theta}_m$, $m=1,\ldots,M$, may assign a different value for each element in the missing mass.  
That is, it is not necessarily a natural estimator  \cite{NIPS2015_5762}, in the sense that elements that appeared
the same number of  times will not necessarily get the same estimated probability.
Moreover, this estimator  is a function of the (local) unknown parameter vector, $\tilde{\thetavec}$, in the general case. Only if it is independent of $\tilde{\thetavec}$, then it is an efficient estimator and its mmMSE  
is equal to the mmCCRB.
The equality condition of the mmCCRB  in (\ref{CRB_equality}) is the basis for a new estimation method developed in Section \ref{scoring_section}.

The mmFIM in \eqref{JJJdef} is a function of the entire pmf, $\thetavec$. It can be verified that $\Jmat^{(0)}(\thetavec)$ is not a diagonal matrix (see also in \eqref{matrixFIM} below). This is because there is a coupling between the  different elements in $\thetavec$, and the estimation of one parameter  affects the accuracy in the estimation of the others.
Finally, it has been shown in recent works that   the profile likelihood, i.e. the empirical distribution up to permutation of the lexicon,
 can  considered to be a sufficient statistic for the problem of missing-mass estimation \cite{acharya2017unified,orlitsky2004modeling,hao2019broad,pavlichin2019approximate}. Thus, in general, a new bound can be developed based on the likelihood of the profile instead of the likelihood in \eqref{delta}.
 According to the extension of the data processing inequality for Fisher information \cite{zamir1998proof}, such a lower bound  may result in a tighter bound than the mmCCRB. However, it is a valid lower bound only on profile-based estimators. In addition, the derivation of such a bound is not straightforward since the entire pmf, $\thetavec$, cannot be estimated based on the profile.
Thus, we leave this topic for further investigation.

\subsection{mmCCRB for the i.i.d. model}
\label{explicit}
The mmCCRB in Theorem \ref{theoremCRB} is a lower bound on the mmMSE from \eqref{law},
which has been developed for the general observation model, $p(\xvec ;\thetavec)$,  $\thetavec\in \Omega_\thetavecsmall$. 
In this subsection, we develop the closed-form expression of the mmCCRB for the classical i.i.d. model, as described by \eqref{zero}.
In addition, we develop the mmCCRB  for
the special case of missing-mass unbiased estimators for the classical model described by \eqref{zero}.

The following corollary describes the  closed-form mmFIM for the i.i.d. case.
	\begin{corollary}
	\label{J0_lemma}
	Let the conditions of Theorem \ref{theoremCRB} be satisfied and  assume the model described in Subsection \ref{model_sub} with the observation pmf given  in \eqref{zero}.
Then, the  mmCCRB for this model is
\beqna
\label{CRB_classical}
 B^{\text{mmCCRB}} (\tilde{\thetavec})\hspace{5cm}\nonumber\\
=\frac{1}{N}
{\text{trace}}\left(\Smat^T(\tilde{\thetavec})\Umat(\Umat^T\Dmat(\tilde{\thetavec})\Umat)^{-1}\Umat^T\Smat(
\tilde{\thetavec})\right)
\nonumber\\
+ \sum_{m=1}^M \frac{\left[\bvec_{N,0}(\tilde{\thetavec})\right]_m^2}{\Pr(\xvec \in {\mathcal{A}}_m ;\tilde{\thetavec})},
\hspace{2.75cm}
\eeqna
	where $\Dmat(\thetavec)$ is a $M\times M$ diagonal matrix with the following elements on its diagonal:
	\be
	\label{D_mat}
	[\Dmat(\thetavec)]_{m,m}= -\frac{(1-\theta_m)^N}{(1-\theta_m)^2}+
	 \frac{1}{\theta_m}  \sum_{l=1,l\neq m}^M \frac{(1-\theta_l)^N}{1-\theta_l}
	    ,
	\ee
	$m=1,\ldots,M$.
 The associated equality condition is given by
\beqna
\label{CRB_equality_classical}
\hat{\theta}_m-\tilde{\theta}_m = \frac{[\bvec_{N,0}(\tilde{\thetavec})]_m}{\Pr (\xvec \in {\mathcal{A}}_m ;\tilde{\thetavec})}   \hspace{3.25cm}
\nonumber\\
+ \frac{1}{N}\left[\Smat^T(\tilde{\thetavec})\Umat(\Umat^T\Dmat(\tilde{\thetavec})\Umat)^{-1} \Umat^T\Deltamat(\xvec,\tilde{\thetavec}) \right]_{m,m},
\eeqna
for any $ m=1,\ldots,M$ such that $m\in G_{N,0}(\xvec)$,
		\end{corollary}
	\begin{proof} In Appendix \ref{FIMappendix}, it is proved that the  mmFIM from \eqref{JJJdef}
 for  the model described in Subsection \ref{model_sub} with the observation pmf in \eqref{zero} is:  
	\beqna
	\label{matrixFIM}
	\Jmat^{(0)}(\thetavec)  =
	\sum_{m=1}^M \frac{N(N-1)}{(1-\theta_m)^2} (1 - \theta_m)^N \onevec_M\onevec_M^T \hspace{1.9cm}
	\nonumber\\
	+\sum_{m=1}^M \frac{N (1-\theta_m)^N }{(1-\theta_m)^2}\left(\evec_m\onevec_M^T +
	\onevec_M\evec_m^T\right) + N\Dmat(\thetavec),
	\eeqna
 where $\Dmat(\thetavec)$ is defined in \eqref{D_mat}.
By using \eqref{matrixFIM} and the null-space property of the matrix $\Umat$ from \eqref{Umat}, $\onevec_M^T\Umat =  \zerovec^T$, we obtain
\beqna
\label{UJU_UDU}
\Umat^T\Jmat^{(0)}(\thetavec)\Umat 
= N\Umat^T\Dmat(\thetavec)\Umat. 
\eeqna
By substituting \eqref{UJU_UDU} in \eqref{CRB}, we obtain the mmCCRB for the classical model in \eqref{CRB_classical}.
Moreover,
by substituting \eqref{UJU_UDU} in \eqref{CRB_equality}, we obtain 
that
the equality condition of the mmCCRB  for the classical model
is given by \eqref{CRB_equality_classical}.
\end{proof}

The auxiliary  matrix, $\Smat(\thetavec)$, in \eqref{Smat} involves the gradient of the missing-mass bias, which makes it intractable for many  estimators with implicit bias gradient function.
	The following lemma presents a tractable form of the auxiliary  matrix for the i.i.d. case, which can be evaluated numerically. 
	This tractable form is a function of the bias of the estimator and of its correlation with the empirical histogram of the observations. 
	\begin{lemma}
	\label{Smat_theorem}
	    The $m$th row of the auxiliary matrix $\Smat(\thetavec)$ from \eqref{Smat} under the model described in Subsection \ref{model_sub} with the observation pmf given  in \eqref{zero} can be calculated as
	    \beqna
	    \label{Smat_final}
	    \Smat_{m,1:M}(\thetavec) = {\rm{E}}_\thetavecsmall
	    \left[ (\hat{\theta}_m (\xvec)-\theta_m) \vvec^T(\xvec,\thetavec) {\mathbbm{1}}_{\{ \xvec \in {\mathcal{A}}_m  \}} \right]
	    \nonumber\\
		+\frac{N}{1 - \theta_m}[\bvec_{N,0}(\thetavec) ]_m(\evec_m- \onevec_M)^T,\hspace{0.5cm}
\eeqna
	    where
	   \be
	   \label{vvec}
	\vvec(\xvec,\thetavec)\define\left[\frac{C_{N,1}(\xvec)}{\theta_1},\ldots, \frac{C_{N,M}(\xvec)}{\theta_M}\right ]^T .
	\ee
	\end{lemma}
	\begin{proof} The proof appears in Appendix \ref{proofSmat}.
\end{proof}
By substituting the auxiliary  matrix, $\Smat(\thetavec)$, from Lemma \ref{Smat_theorem} in \eqref{CRB_classical}, we obtained a tractable version of the mmCCRB on the mmMSE of biased estimators.
In contrast with the traditional biased CRB, which  requires {\em{a priori}} specification of the desired bias gradient \cite{yonina_penalized,kay2008rethinking}, here, we  need to  specify the expectations in \eqref{Smat_final}, which enables an numerical calculation if needed.

In the following, we describe the mmCCRB for missing-mass unbiased estimators.
For the sake of simplicity of derivation, we assume
in the following that $\bvec_{N,0}(\thetavec)=\zerovec$. 
According to  Lemma \ref{unbiasedness_prop}, this condition  is a {\em{sufficient}} condition for the Lehmann unbiasedness in \eqref{unbiasedness_cond_local} and
\eqref{bias_MSE_derivative}. 
\begin{corollary}
	\label{lemma_unbiased}
	Let the conditions of Theorem \ref{theoremCRB} be satisfied and  assume the model described in Subsection \ref{model_sub} with   $\bvec_{N,0}(\thetavec)=\zerovec$.
Then,  the mmCCRB for this model and missing-mass unbiased estimators
  is:  
 \beqna
\label{bound_unbiased}
B^{\text{mmCCRB}} (\tilde{\thetavec})
\hspace{5.5cm}
\nonumber\\
=\frac{1}{N}\sum_{m=1}^M (1 - \tilde{\theta}_m)^{2N}
\left[\Umat(\Umat^T\Dmat(\tilde{\thetavec})\Umat)^{-1}\Umat^T\right]_{m,m},
\eeqna
 where $\Dmat(\thetavec)$ is defined in \eqref{D_mat}.
 \end{corollary}
 \begin{proof}
By substituting $\bvec_{N,0}(\thetavec)=\zerovec$
in \eqref{Smat}, we obtain that in this case $\Smat(\thetavec)$ is a diagonal matrix with the diagonal elements \beqna
\label{S_unb_gen}
[{\Smat}(\thetavec)]_{m,m} = \Pr(\xvec \in {\mathcal{A}}_m ;\thetavec)
= (1 - {\theta}_m)^{N},
\eeqna
$m=1,\ldots,M$,
where the last equality is obtained by substituting \eqref{probofkunseen}.
By substituting $\bvec_{N,0}(\thetavec)=\zerovec$ and \eqref{S_unb_gen} in   \eqref{CRB_classical}, we obtain that the mmCCRB on the mmMSE of a  
missing-mass unbiased estimator is
\beqna
\label{CRB_classical_unbiased}
 B^{\text{mmCCRB}} (\tilde{\thetavec})
=\frac{1}{N}
{\text{trace}}\left(\Umat^T\Pmat(\tilde{\thetavec})\Umat(\Umat^T\Dmat(\tilde{\thetavec})\Umat)^{-1}\right),
\eeqna
where $\Pmat(\thetavec)$ is a diagonal $M\times M$ matrix with the following elements on its diagonal:
\beqna
\label{Pmat_def}
[\Pmat(\tilde{\thetavec})]_{m,m} \define 
 (1 - \tilde{\theta}_m)^{2N}
, ~m=1,\ldots,M.
\eeqna
By substituting 
\eqref{Pmat_def} in \eqref{CRB_classical_unbiased} and using the trace operator properties,
  the mmCCRB for this case is given by \eqref{bound_unbiased}.
\end{proof}

The main advantage of the 
mmCCRB for missing-mass unbiased estimators in Corollary
	\ref{lemma_unbiased}, is that it is
	only a function of the symbol generation system via the true pmf, $\thetavec$, and the number of observations, $N$. 
	While the minimax MSE is lower-bounded by $\frac{c}{N}$ for
a constant $c$ \cite{Acharya_2018,rajaraman2017minimax}, the lower bound on the mmMSE provided by the  mmCCRB in \eqref{bound_unbiased} has a more complicated structure as a function of $N$.
It should be noted also that the minimax MSE approach is derived for a specific algorithm (e.g. Good-Turing estimator) or for the worst-case pmf, while the proposed mmCCRB applies to all algorithms and should be evaluated for each  value of $\thetavec$.
Finally, it can be seen that for each different pmf, $\thetavec$, the bound in 
\eqref{bound_unbiased} requires only the computation of the diagonal matrix   $\Dmat(\thetavec)$, which is defined in \eqref{D_mat}.

\subsection{Special cases}
\label{special_cases_subsection}
In this subsection, we develop some important special cases of  the mmCCRB   for the i.i.d. model
from Subsection \ref{explicit}.
\subsubsection{mmCCRB on the mmMSE of the CML estimator}
\label{CML_bound_sec}
	The  CML estimator from \eqref{CML} assigns a zero probability to unseen events:   \be
	\label{CML_zero}
	\hat{\theta}_m^{\text{CML}} = 0, ~\forall m\in G_{N,0}(\xvec).
	\ee
	By substituting  \eqref{CML_zero} in \eqref{bvec}, one obtains that the missing-mass bias of the CML estimator satisfies
	\beqna
	\label{bias_CML}
	[\bvec_{N,0}^{\text{CML}}(\thetavec)]_m = {\rm{E}}_\thetavecsmall [ \hat{\theta}_m^{\text{CML}} - \theta_m | \xvec \in {\mathcal{A}}_m  ] \Pr ( \xvec \in {\mathcal{A}}_m ;\thetavec)
	\nonumber\\
     = - \theta_m\Pr ( \xvec \in {\mathcal{A}}_m ;\thetavec), \hspace{2.85cm}
	\eeqna
	for any $m=1,\ldots,M$.
	By substituting \eqref{bias_CML} in
 \eqref{Smat}, we obtain that for the  CML estimator, the auxiliary  matrix satisfies $\Smat(\thetavec) = \zerovec_{M\times M}$.
 By substituting this result and \eqref{bias_CML} in  \eqref{CRB_classical}, we obtain that the mmCCRB on the mmMSE of the CML estimator (or any other estimator with the same bias function as in \eqref{bias_CML}) is
\beqna
\label{CRB_CML}
 B^{\text{mmCCRB}} (\tilde{\thetavec})
=
	\sum_{m=1}^M \tilde{\theta}_m^2 \Pr(\xvec \in {\mathcal{A}}_m ;\tilde{\thetavec}).
	\eeqna
On the other hand,
by substituting \eqref{CML_zero} in \eqref{law}, it can be seen that 
 the mmMSE of the  CML estimator evaluated at the local point, $\tilde{\thetavec}$, is
 \beqna
 \label{MSE_CML}
 {\rm{E}}_{\tilde{\thetavecsmall}} \left[ C(\hat{\thetavec}^{\text{CML}},\tilde{\thetavec}) \right] 
 = \sum_{m=1}^M \tilde{\theta}_m^2 \Pr(\xvec \in {\mathcal{A}}_m ;\tilde{\thetavec}). 
 \eeqna
Thus,  in this case, the biased mmCCRB  with the bias function of the CML estimator coincides with the mmMSE of the CML estimator. Therefore, we can conclude that there is no other estimator with the same missing-mass bias as that of the CML estimator, given in \eqref{bias_CML}, that achieves a lower mmMSE  than the CML estimator.
It should be noted that since we used the mmCCRB from Theorem \ref{theoremCRB}, which was developed for the general observation model, $p(\xvec ;\thetavec)$,  $\thetavec\in \Omega_\thetavecsmall$, this result also holds for  non-i.i.d. sampling with a general structure of $p(\xvec ;\thetavec)$ \cite{berend2013concentration,hao2018learning,skorski2020missing}.
 \subsubsection{mmCCRB on the mmMSE of a general missing-mass estimator}
 \label{Good_Turing_bound}
 Various estimators of the 
missing mass have been suggested in the literature. In this paper we consider three estimators that are described in  Section \ref{simulations_sec}: The Good-Turing, Laplace, and aPML estimators. In the following, we describe how to obtain the biased mmCCRB from Corollary \ref{J0_lemma} for a general missing-mass estimator.

Consider an  estimator $\hat{p}_0(\xvec,\hat{\thetavec})$ for the missing mass from \eqref{pr0}. Under the assumption of a natural estimator \cite{NIPS2015_5762}, the associated estimator
  of a specific element in $G_{N,0}(\xvec)$ is 
\be
\label{GT2}
\hat{\theta}_m=\left\{\begin{array}{lr}\frac{ \hat{p}_0(\xvec,\hat{\thetavecsmall}) }{|G_{N,0}(\xvec)|}&{\text{if }} G_{N,0}(\xvec)\neq \emptyset\\
0& {\text{if }} G_{N,0}(\xvec)= \emptyset 
\end{array}\right.,
\ee
for any $m\in G_{N,0}(\xvec)$.
	By substituting \eqref{GT2} in \eqref{bvec}, one obtains that the missing-mass bias of an arbitrary estimator $\hat{p}_0(\xvec,\hat{\thetavec})$ satisfies
	\beqna
	\label{bias_GT}
	\left[\bvec_{N,0}(\thetavec)\right]_m\hspace{5.5cm}
	\nonumber\\ ={\rm{E}}_\thetavecsmall \left[ \frac{  \hat{p}_0(\xvec,\hat{\thetavec})}{|G_{N,0}(\xvec)|} - \theta_m | \xvec \in {\mathcal{A}}_m  \right] \Pr ( \xvec \in {\mathcal{A}}_m ;\thetavec).
	\eeqna
	While  a closed-form expression of the missing-mass bias of an arbitrary estimator in \eqref{bias_GT} is intractable, the proposed bound can be used by numerically  calculating the auxiliary matrix for this case. 
	That is, by substituting \eqref{GT2} in \eqref{Smat_final}, one obtains 
	\beqna
	\label{GTinSmat}
	\Smat_{m,1:M}(\thetavec) \hspace{5.5cm}
	\nonumber\\=
	  {\rm{E}}_\thetavecsmall
	    \left[ \left(\frac{  \hat{p}_0(\xvec,\hat{\thetavec})}{|G_{N,0}(\xvec)|}-\theta_m\right) \vvec^T(\xvec,\thetavec) | \xvec \in {\mathcal{A}}_m  \right] \Pr ( \xvec \in {\mathcal{A}}_m ;\thetavec)
	    \nonumber\\
		+\frac{N}{1 - \theta_m}[\bvec_{N,0}(\thetavec) ]_m(\evec_m- \onevec_M)^T.\hspace{0.5cm}
\eeqna
	Then, by substituting \eqref{bias_GT} and \eqref{GTinSmat} in \eqref{CRB_classical}, we obtain the associated mmCCRB, which can be evaluated numerically. In particular, a Monte Carlo approach
can be applied to approximate the expectation in  \eqref{GTinSmat}, in a similar manner to the empirical FIM approximation described in \cite{berisha2014empirical}.

\subsubsection{Uniform distribution}
\label{uniform_subsec}
For the special case where 
$\thetavec=\frac{1}{M}\onevec_M$, the diagonal elements of the matrix $\Dmat(\thetavec)$ from \eqref{D_mat}  are given by 
	\beqna
	\label{D_unif}
	[\Dmat(\thetavec)]_{m,m}= -\left(\frac{M-1}{M}\right)^{N-2}+
	M^2 \left(\frac{M-1}{M}\right)^{N}
	\nonumber\\
	=
	 M(M-2)  \left(\frac{M-1}{M}\right)^{N-2}
	    ,\hspace{1.4cm}
	    \eeqna
	$m=1,\ldots,M$.
Similarly, for this case \eqref{Pmat_def} is reduced to 
\be
\label{P_unif}
\Pmat(\thetavec) =\left(\frac{M-1}{M}\right)^{2N}\Imat_M.
\ee
By substituting  \eqref{D_unif} and \eqref{P_unif} in \eqref{CRB_classical_unbiased} and using $\Umat^T\Umat=\Imat$ from (\ref{Umat}),
we obtain that for this case the mmCCRB missing-mass unbiased estimator  is given by
\beqna
\label{bound_unbiased_unif}
 B^{\text{mmCCRB}} (\tilde{\thetavec})
=\frac{1}{N} 
\left(\frac{M-1}{M}\right)^{N+3}\frac{1}{M -2 } ,
\eeqna
for $M>2$, where we used the cyclic property of the trace. We can see that the mmCCRB for the uniform pmf from \eqref{bound_unbiased_unif} decreases as the number of samples, $N$, increases, since we have more information. 
In general, the rate of decrease is a function of $M$. For large values of $M$, the mmCCRB has approximately the order of  $\frac{1}{N}$, similar to the minimax results \cite{Acharya_2018,rajaraman2017minimax}. The lower bounds on the minimax MSE are independent of $M$, which  is assumed unknown in \cite{Acharya_2018,rajaraman2017minimax}.
As a function of $M$, the mmCCRB in \eqref{bound_unbiased_unif} increases as $M$ increases if $ 1 < M \leq \frac{N+4 + \sqrt{N^2+4}}{2}$, and  decreases as $M$ increases otherwise.
On the other hand,  for this case of uniform pmf the  CCRB in \eqref{final_CCRB} on the trace MSE is reduced to  
 \beqna
\label{final_CCRB_uniform}
B^{\text{CCRB}} (\thetavec)=\frac{1}{N}-
\frac{1}{MN}.
\eeqna
Thus, the CCRB is almost independent of the number of elements, $M$, for large value of $M N$, in contrast to the missing-mass CCRB in 
 \eqref{bound_unbiased_unif}, and, thus, is less informative for the problem of missing-mass estimation.

\section{Missing-mass Fisher-scoring-type estimation}
\label{scoring_section}
Obtaining  the minimum mmMSE estimator among all unbiased estimators  is
usually intractable. Moreover, in most nonlinear  parameter estimation problems, such an estimator does not exist. Therefore, 
in this section we describe a new iterative algorithm,
the missing-mass Fisher-scoring algorithm, that further improves the performance of existing estimators by using the proposed bound.
Similar to the Fisher-scoring method \cite{rao1973linear} and the constrained Fisher-scoring method \cite{Moore_Sadler_Kozick2008,whipps2016constrained}, the equality condition in \eqref{CRB_equality} can be used to obtain an iterative estimation procedure. In this case, the estimator at the $k$th iteration, $\hat{\thetavec}^{(k)}$, is obtained by substituting the estimator from the previous iteration,  $\tilde{\thetavec}=\hat{\thetavec}^{(k-1)}$, in \eqref{CRB_equality} to obtain
\beqna
\label{CRB_equality_estimator}
\hat{\theta}_m^{(k)}-\hat{\theta}_m^{(k-1)}
=\psi^{(k)} \left.\left\{\frac{[\bvec_{N,0}(\tilde{\thetavec})]_m}{\Pr (\xvec \in {\mathcal{A}}_m ;\tilde{\thetavec})}   \right.\right.\hspace{2.5cm}
\nonumber\\+
\left.\left.
 \left[\Smat^T(\tilde{\thetavec})\Umat(\Umat^T\Jmat^{(0)}(\tilde{\thetavec})\Umat)^{-1} \Umat^T\Deltamat(\xvec,\tilde{\thetavec}) \right]_{m,m}\right\}\right|_{\tilde{\thetavecsmall}=\hat{\thetavecsmall}^{(k-1)}},
\eeqna
for any $ m=1,\ldots,M$ such that $m\in G_{N,0}(\xvec)$,  where $\psi^{(k)}$ is the step size at the $k$th iteration.
Similarly,
for the i.i.d. model,
 the equality condition in \eqref{CRB_equality_classical} results in the following $k$th iteration:
 \beqna
\label{CRB_equality_classical_estimator}
\hat{\theta}_m^{(k)}-\hat{\theta}_m^{(k-1)}
=\psi^{(k)} \left.\left\{ \frac{[\bvec_{N,0}(\tilde{\thetavec})]_m}{\Pr (\xvec \in {\mathcal{A}}_m ;\tilde{\thetavec})}   \right.\right.\hspace{2.5cm}
\nonumber\\
\left.\left.
+ \frac{1}{N}\left[\Smat^T(\tilde{\thetavec})\Umat(\Umat^T\Dmat(\tilde{\thetavec})\Umat)^{-1} \Umat^T\Deltamat(\xvec,\tilde{\thetavec}) \right]_{m,m}\right\}\right|_{\tilde{\thetavecsmall}=\hat{\thetavecsmall}^{(k-1)}},
\eeqna
for any $ m=1,\ldots,M$ such that $m\in G_{N,0}(\xvec)$,  where $\psi^{(k)}$ is the step size at the $k$th iteration.
An appropriate step-size rule for  $\psi^{(k)}$, $k=1,2,\ldots$, should be chosen to guarantee usability (such that the resulting iterate reduces the mmMSE)  and to stabilize the
convergence.
 In comparison with the classical method of Fisher-scoring or with the constrained Fisher-scoring \cite{Moore_Sadler_Kozick2008}, the proposed iteration in  \eqref{CRB_equality_classical_estimator} is essentially a replacement of the Cram$\acute{\text{e}}$r-Rao bound (in classical Fisher-scoring) or the CCRB (in the constrained Fisher-scoring) with the new mmCCRB from \eqref{CRB_classical}  in the Fisher-scoring iteration.

The initial estimator, $\hat{\thetavec}^{(0)}$, can be
chosen to be any  existing estimator, such as the CML, Good-Turing,
Laplace, or aPML estimator, all described in Subsections \ref{simulations_sec}.
In order to obtain reasonable estimation, the initial estimator: 1) should satisfy the constraint $\thetavec\in\Omega_\thetavecsmall$ in \eqref{setworld}, i.e. $\onevec_M^T\hat{\thetavec}^{(0)}=1$; and 2) should be a natural estimator \cite{NIPS2015_5762}, i.e.   $\hat{\thetavec}^{(0)}$ assigns the same probabilities to symbols appearing the
same number of times.
Similarly,
after each iteration, we project the solution to the constraint set and to be a natural estimator, by the following steps:
 \beqna
\label{step1}
\hat{\theta}_m^{(k)}=\frac{\hat{\theta}_m^{(k)}}{\sum_{l=1}^M \hat{\theta}_l^{(k)}},
~m=1,\ldots,M
\eeqna
and
 \beqna
\label{step2}
\hat{\theta}_m^{(k)}= \frac{1}{|G_{N,C_{N,m}}(\xvec)|}
\sum_{\stackrel{l=1}{l\in G_{N,C_{N,m}}(\xvec)}}^M \hat{\theta}_l^{(k)} ,
\eeqna
$m=1,\ldots,M$.
For a desired
tolerance $\upsilon$, the algorithm exits when the condition
$||\hat{\thetavec}^{(k)}-\hat{\thetavec}^{(k-1)}|| <\upsilon$ is met.
Since the lexicon size is assumed to be known, in a case where we observe all symbols,  $G_{N,0}(\xvec)=\emptyset$, we set $\hat{p}_{0}(\xvec,\thetavec)=0$.
Finally, the algorithm  is summarized in Algorithm \ref{NR}.
\begin{algorithm}[hbt]
\DontPrintSemicolon
   \KwInput{\begin{itemize} 
   \item $M$ - Number of symbols
   \item  $\xvec$ - Observation vector
   \item $\hat{\thetavec}^{(0)}$ - Initial estimator 
   \item $\psi^{(k)}$, $k=1,\ldots$ - step sizes
   \item $\upsilon$ - Tolerance
   \item $K_{\max}$ - Maximum iteration number
   \end{itemize}
  }
  \KwOutput{$\hat{p}_{0}(\xvec,\thetavec)$ - Estimator of the missing mass}
   Initialize  $k=0$
   \\
    \eIf{$G_{N,0}(\xvec)=\emptyset$}
    {\textbf{Return:} $\hat{p}_{0}(\xvec,\thetavec)=0$}
  {Update $k\leftarrow k+1$\label{2_1}\\
  Update
$\hat{\thetavec}^{(k)}\leftarrow\hat{\thetavec}^{(k-1)}$ by
  \eqref{CRB_equality_classical_estimator} with  step size $\psi^{(k)}$ 
     \\
    Correct $\hat{\thetavec}^{(k)}$ by the projection in \eqref{step1}
    \\
    Correct $\hat{\thetavec}^{(k)}$ by the projection in \eqref{step2}
     \\
            \eIf{$||\hat{\thetavec}^{(k)}-\hat{\thetavec}^{(k-1)}|| <\upsilon$ and/or $k>K_{\max}$
            }
            {\textbf{Return:} \be
            \hat{p}_{0}(\xvec,\thetavec)=
            \frac{1}{|G_{N,0}(\xvec)|}
\sum_{m=1,m\in G_{N,0}(\xvec)}^M \hat{\theta}_m^{(k)} 
\ee.}{
            Repeat to step \ref{2_1}.
            }}
        \caption{missing-mass Fisher-scoring algorithm  for improving missing-mass estimators}\label{NR}
\end{algorithm}

\section{Simulations}
\label{simulations_sec}
In this section, we evaluate the proposed bound and the missing-mass Fisher-scoring method. 
In Subsection \ref{setting_subsec}, we describe the estimators and bounds that are evaluated in the simulations. In Subsections \ref{uniform2_subsec} and \ref{zipf_subsec} we evaluate the performance for uniform and Zipf distributions, respectively.

\subsection{Estimators and bounds}
\label{setting_subsec}
In the following simulations, we evaluate the performance of  four estimators of the missing mass:\\
I. CML estimator  from \eqref{CML}.\\ 
II. Good-Turing estimator -  
 The Good-Turing estimator \cite{Good_1953} of the missing mass from \eqref{pr0}
  is defined as the fraction of symbols occurring exactly once in the  observed samples divided by the length of the observation vector.
  It is well known that smoothing of the Good-Turing estimator may improve the estimation performance \cite{Good_1953,Orlitsky427}. Here we use a smooth modified version of the Good-Turing estimator,  described in   in \cite{Orlitsky427}.
This modified Good-Turing estimator
 is given by
\be
\label{GT_mm_try}
\hat{p}_{0}^{GT}(\xvec,\hat{\thetavec})=\frac{  \varphi(|G_{N,1}(\xvec)|)}{\zeta},
\ee
where
$
\varphi (t)= \max\{t,1\}$, $\forall t \in {\mathbb{R}}$,  $|G_{N,1}(\xvec)|$ is the number of elements that appear exactly once in  the $N$-length observation vector, $\xvec$, and 
\beqna
\zeta\define \varphi(|G_{N,1}(\xvec)|)\hspace{5.25cm}\nonumber\\+\sum_{r\in\{r:|G_{N,r}(\xvec)|>0\}} |G_{N,r}(\xvec)|(r+1)\frac{\varphi(|G_{N,r+1}(\xvec)|)}{|G_{N,r}(\xvec)|}
\eeqna
is a normalization factor.
\\
III.  Laplace estimator - 
 The add-constant estimator
 of the missing mass from \eqref{pr0}
  is defined as \cite{Nadas_1985}
\beqna
\label{Laplace_mm_try}
\hat{p}_{0}^{add-c}(\xvec,\hat{\thetavec})=\frac{c}{N+c(M-|G_{N,0}(\xvec)|+1)},
\eeqna
for a positive constant $c$.
The add-constant estimator has
 been applied and studied extensively and has been shown to have some optimality properties \cite{Orlitsky427}.
 For $c=1$, we obtain 
  the special case of the Laplace estimator \cite{laplace2012pierre,Nadas_1985}, used in the simulations.
  \\
IV. aPML estimator- The  PML estimator
\cite{acharya2017unified,orlitsky2004modeling,hao2019broad} has impressive statistical properties, but
is computationally challenging. Consequently,
an efficiently computable approximation for the PML distribution was proposed in \cite{pavlichin2019approximate}.
In the simulations, we use the code from \cite{aPMLcode} for computing the  aPML distributions, where the support size is used as a given input. Then, we take the smallest value in this distribution as the aPML estimator of the missing mass.

The performance of these estimators is evaluated using $500,000$ Monte-Carlo simulations
that are used to evaluate the
mmMSE,
${\rm{E}}_\thetavecsmall[C(\hat{\thetavec},\thetavec)]$, and the absolute value of the  missing-mass total bias, $|\sum_{m=1}^M \left[\bvec_{N,0}(\thetavec) \right]_m|$, as defined in \eqref{law} and \eqref{bvec}, respectively.

We  compare the performance of these estimators  with the following bounds:
\begin{itemize}
    \item The CCRB from \eqref{final_CCRB}, which is a lower bound on the MSE  of the entire pmf vector and is   presented  here  in order to compare its  behavior with that of the proposed mmMSE bounds.
    \item The mmCCRB with the bias of the CML  estimator, as given in  \eqref{CRB_CML}.
    \item  Three versions of the biased mmCCRB from  Corollary
	\ref{J0_lemma}  with the empirical bias  and  the empirical auxiliary matrix, $\Smat(\thetavec)$, as described in Subsection \ref{Good_Turing_bound}
	for the Good-Turing, Laplace, and aPML estimators. 
    \item  The  mmCCRB on missing-mass unbiased estimators from Corollary \ref{lemma_unbiased}.
\end{itemize}
It can be verified that in all the simulations the regularity condition \ref{cond1} is satisfied.

\subsection{Example 1: Uniform distribution}
\label{uniform2_subsec}
In the first experiment 
we examine  the case of a uniform pmf with equally-likely elements, i.e. where $\thetavec = \frac{1}{M}\onevec_M$, as described in Subsection \ref{uniform_subsec}. 
In Figs. \ref{Bias_uni_vs_M} and  \ref{mmMSE_uni_vs_M} we present the
missing-mass bias and the mmMSE, respectively,
of the different estimators versus the number of elements, $M$, for $N=30$. Similarly, In Figs. \ref{Bias_uni_vs_N} and  \ref{mmMSE_uni_vs_N} we present the
missing-mass bias and the mmMSE, respectively,
of the different estimators versus the number of samples, $N$, for $M=15$.
The CCRB and the mmCCRB of unbiased estimators are also presented in Figs. \ref{mmMSE_uni_vs_M} and \ref{mmMSE_uni_vs_N}.
The performance of the aPML estimator for this case is not presented  since the default of this 
estimator, where the observations are insufficient, is to choose the uniform estimator. Thus, the aPML estimator is unstable and not suitable  for comparison in the uniform distribution case.
It can be seen that for this case, the Good-Turing estimator
outperforms the two other estimators in both missing-mass bias and mmMSE terms, where the gap is larger where $M$ is larger and where $N$ is smaller.
The differences between the performance of the CML and the Laplace estimators are insignificant.
 In addition, it  can be seen in Fig. \ref{mmMSE_uni_vs_N} that  the mmMSE of the Good-Turing estimator coincides with  the proposed  mmCCRB on the mmMSE of {\em{unbiased}} estimators  from Corollary \ref{lemma_unbiased} for small values of $N$.
 While the unbiased mmCCRB is not a tight bound for this case, its curve demonstrates the influence of the system parameters, $M$ (in Fig. \ref{mmMSE_uni_vs_M}) and $N$ (in Fig. \ref{mmMSE_uni_vs_N}), on the practical mmMSE. In contrast, the  CCRB is not a useful tool for performance analysis and system design. For example, it does not reflect the influence of $M$, as also shown analytically in Subsection \ref{uniform_subsec}.
\subsection{Example 2: Zipf distribution}
\label{zipf_subsec}
In the second experiment, we consider a Zipf's law distribution,
$\theta_m =\frac{m^{-s}}{\sum_{k=1}^M k^{-s}}$,  $ m =1,\ldots,M $, where $s$ is the skewness parameter.
The Zipf's law distribution is a  heavy-tailed distribution that is widely used in physical and social sciences,
linguistics, economics, and  other fields \cite{Zipf}.
In  Figs. \ref{Bias_zip1_vs_M} and \ref{mmMSE_zip_vs_M}
we present
the bias and mmMSE,
 respectively, of the different estimators
 versus the number of elements, $M$, for $N=100$ and $s=1$.
 Similarly, in  Figs. \ref{Bias_zip_vs_N} and \ref{mmMSE_zip_vs_N}
we present
the bias and mmMSE,
 respectively,
 versus the number method.sof samples, $N$, for $M=15$ and $s=1$. In addition,
in Figs. \ref{mmMSE_zip_vs_M} and \ref{mmMSE_zip_vs_N} we also present the CCRB, unbiased mmCCRB, and the  different biased mmCCRBs associated with the considered estimators.

It
can be seen in Figs. \ref{Fig_zif_vs_M}-\ref{Fig_zif_vs_N} that the CML estimator has the largest missing-mass bias and the largest mmMSE.
Additionally, the   aPML estimator has  the smallest missing-mass bias and the smallest mmMSE for all $N$ and $M$.
In Figs. \ref{mmMSE_zip_vs_M} and \ref{mmMSE_zip_vs_N} we can see that the mmCCRB on the  mmMSE of missing-mass unbiased estimators is  lower than the actual mmMSE of theestimators, since 
 these estimators are {\em{biased}} in the Lehmann sense. However,
the  curve of the mmCCRB demonstrates the influence of the system parameters, $M$ (in Fig. \ref{mmMSE_zip_vs_M}) and $N$ (in Fig. \ref{mmMSE_zip_vs_N}), on the practical mmMSE. In contrast, the  CCRB does not reflect the influence of $M$ also in this case.
 
\begin{figure}[hbt]
     \centering
\subcaptionbox{\label{Bias_uni_vs_M}}[\linewidth]
{ \includegraphics[width=8.5cm]{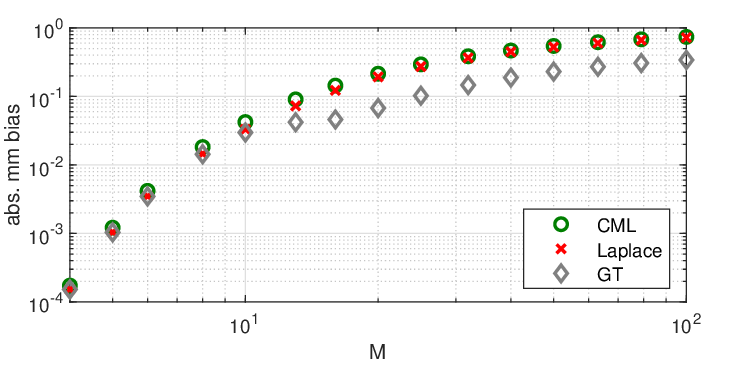}}
\subcaptionbox{\label{mmMSE_uni_vs_M}}[\linewidth]
{\includegraphics[width=9cm]{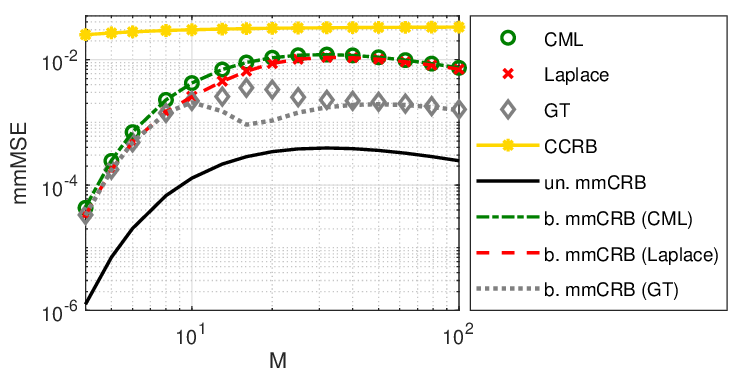}} 
     \caption{Example 1 (uniform pmf): The performance of the  CML,  Good-Turing, and Laplace estimators   versus the number of elements, $M$, in terms of 
      missing-mass bias (a) and the 	mmMSE (b). In (b) we also present the CCRB and the biased and unbiased versions of the  mmCCRB.}
 \label{Fig_uni_vs_M}
 \end{figure}
  \begin{figure}[hbt]
     \centering
\subcaptionbox{\label{Bias_uni_vs_N}}[\linewidth]
{\includegraphics[width=8.5cm]{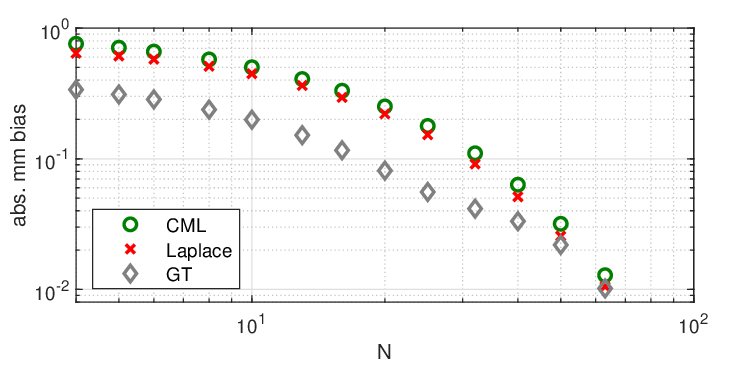}} 
\subcaptionbox{\label{mmMSE_uni_vs_N}}[\linewidth]
{\includegraphics[width=9cm]{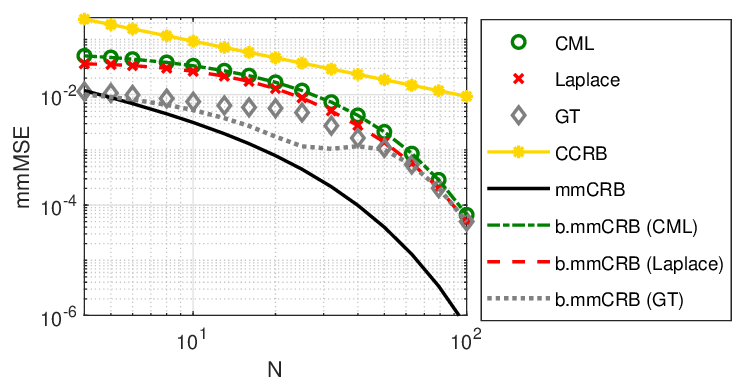}} 
     \caption{Example 1 (uniform pmf): The performance of the  CML,  Good-Turing, and Laplace estimators   versus the number of samples, $N$, in terms of 
      missing-mass bias (a) and the 	mmMSE (b). In (b) we also present the CCRB and the biased and unbiased versions of the  mmCCRB.}
 \label{Fig_uni_vs_N}
 \end{figure}
It can be seen 
that the biased mmCCRB with the CML bias coincides with the mmMSE of the CML estimator, as shown analytically in \eqref{CRB_CML}-\eqref{MSE_CML}.
Similarly, the biased mmCCRB associated with the Laplace estimator coincides with the mmMSE of the Laplace estimator. Thus, the CML and Laplace estimators achieve the lowest mmMSE for their associated bias function. However, for the Good-Turing and aPML estimators there is a gap between the associated mmCCRB and the mmMSE.  As the sample size, $N$, increases, the
mmMSE of these estimators achieves the associated mmCCRBs. 
\begin{figure}[hbt]
     \centering
\subcaptionbox{\label{Bias_zip1_vs_M}}[\linewidth]
{\includegraphics[width=8.5cm]{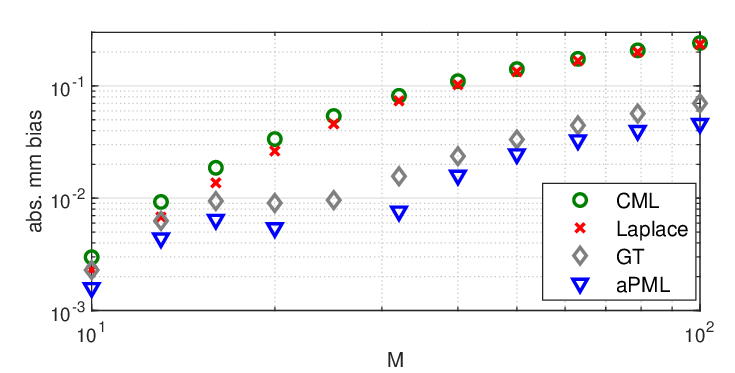}}
\subcaptionbox{\label{mmMSE_zip_vs_M}}[\linewidth]
{\includegraphics[width=9cm]{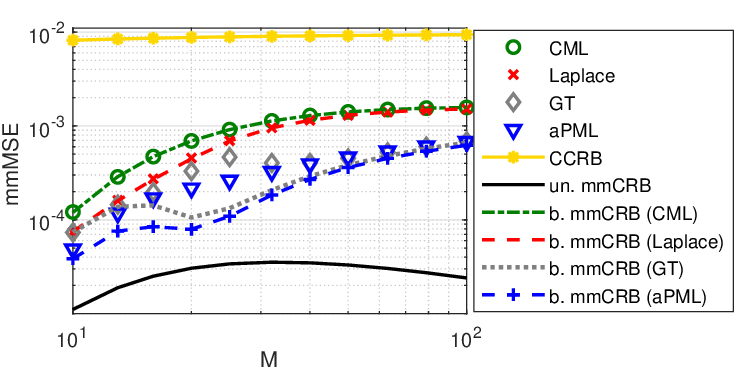}} 
  \caption{Example 2  (Zipf distribution): The performance of the  CML,  Good-Turing,  Laplace, and aPML estimators   versus the number of elements, $M$, for $N=100$ and $s=1$ in terms of 
      missing-mass bias (a) and the 	mmMSE (b). In (b) we also present the CCRB and the biased and unbiased versions of the  mmCCRB.}
 \label{Fig_zif_vs_M}
 \end{figure}
Thus,  in this case, there can be estimators with the same bias as the Good-Turing or aPML  estimator but with a lower mmMSE. For large values of $M$ and small values of $N$, the mmMSE of the aPML and the Good-Turing  estimators coincides. These regions can also be deduced from a comparison between the biased mmCCRBs associated with these estimators.
  \begin{figure}[hbt]
     \centering
\subcaptionbox{\label{Bias_zip_vs_N}}[\linewidth]
{ \includegraphics[width=8.5cm]{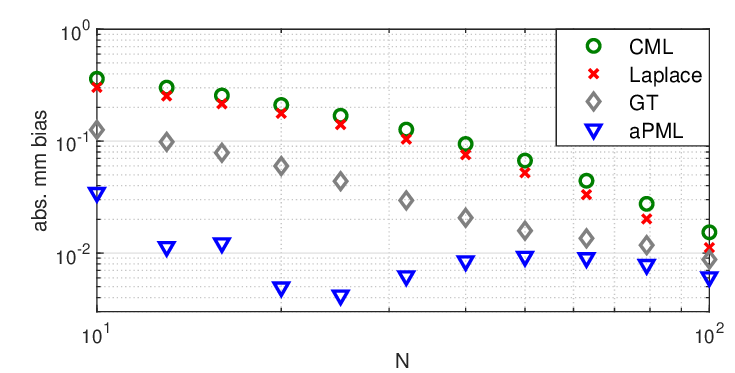}}
\subcaptionbox{\label{mmMSE_zip_vs_N}}[\linewidth]
{\includegraphics[width=9cm]{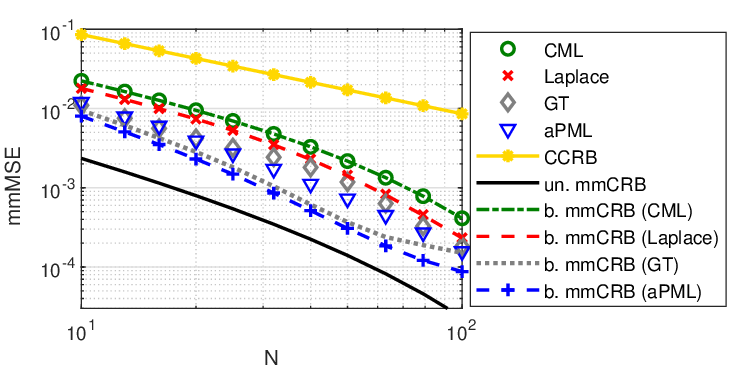}} 
     \caption{Example 2  (Zipf distribution): The performance of the  CML,  Good-Turing,  Laplace, and aPML estimators   versus the number of samples, $N$, for $M=15$ and $s=1$ in terms of 
      missing-mass bias (a) and the 	mmMSE (b). In (b) we also present the CCRB and the biased and unbiased versions of the  mmCCRB.}
 \label{Fig_zif_vs_N}
 \end{figure}
 
 In Figs. \ref{iteration1} and \ref{iteration2}, we compare  the missing-mass bias and the mmMSE of the Laplace estimator and the  estimators that are obtained after $1-5$ iterations of the proposed missing-mass Fisher-scoring method in Algorithm \ref{NR} with  initialization by the Laplace estimator. We set $\psi^{(k)}=\frac{1}{N}$, $\forall k\geq 1$ in Algorithm \ref{NR}.
 It can be seen that the proposed missing-mass Fisher-scoring method reduces the missing-mass bias and the mmMSE of the Laplace estimator. In addition, the proposed method is consistent in the sense that by using more iterations, we obtain better estimators.
These results demonstrate that the proposed mmCCRB 
 can be used to  further improve the performance of existing estimators of the missing mass.
   \begin{figure}[htb]
     \centering
\subcaptionbox{\label{iteration1}}[\linewidth]
{\includegraphics[width=5.5cm]{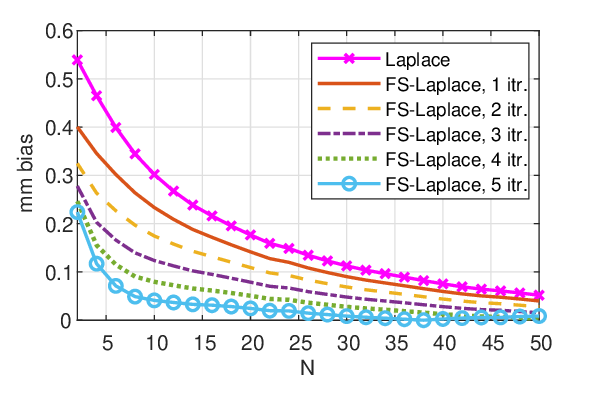}} 
\subcaptionbox{\label{iteration2}}[\linewidth]
{\includegraphics[width=5.5cm]{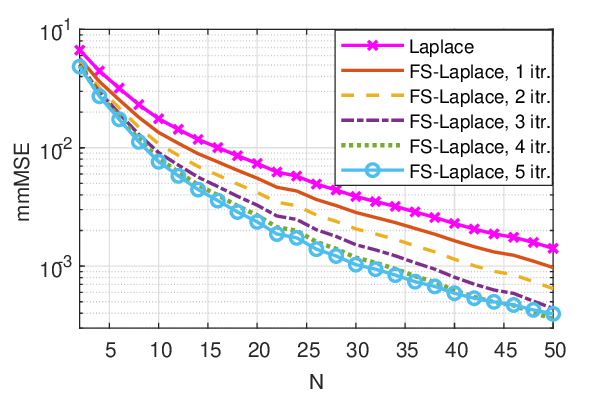}}
     \caption{Example 2 (Zipf distribution): The performance of the Laplace estimator and its improvement by the missing-mass  Fisher-scoring method (from Algorithm \ref{NR}) after 1-5 iterations  versus the number of samples, $N$, in terms of 
      missing-mass bias (a) and the mmMSE (b). 
      }
 \label{iteration}
 \end{figure}


\section{Conclusion}
\label{Conclusionsec}
In this paper, we consider the problem of estimation
of the missing mass.
Similar to the 
		naive CML estimator, which
	overestimates the probability of the observed elements, the CCRB on the MSE of  $\chi$-unbiased 
estimators does not provide a relevant bound  for the missing-mass estimation problem.
Hence, we adopt a new non-Bayesian approach, which is based on using
 the mmMSE risk function that only penalizes the estimation errors of elements that belong to the  missing mass. 
The missing-mass unbiasedness, which   is based on Lehmann's concept of unbiasedness and the  mmMSE risk function, is proposed.
We develop a new Cram$\acute{\text{e}}$r-Rao-type bound for this problem, the mmCCRB, which is a lower bound
on the mmMSE of any locally  missing-mass unbiased estimators. In addition,
the biased mmCCRB on the mmMSE of for missing-mass biased estimators is developed. 
By using the mmCCRB on the mmMSE of the CML estimator, we show analytically that  the CML estimator has the smallest mmMSE among all estimators that have the same missing-mass bias as the CML estimator. 
Based on the equality condition of the new mmCCRB, we  derive a new method to improve existing estimators by an iterative missing-mass Fisher-scoring method.

In the simulations, we show  that 
 the unbiased mmCCRB is not a tight bound, but it can predict the behavior of the estimators w.r.t. the different system parameters.  The biased versions of the mmCCRB are tight for the CML and Laplace estimators. Thus,  the CML and the Laplace estimators achieve the lowest mmMSE possible for their bias. In contrast, for the Good-Turing and aPML estimators, there are regions in which one can theoretically find a better estimator (in the mmMSE sense)  with the same bias.
In addition, the different biased mmCCRBs 
can also be useful for setting the
 order relation between the different estimators and for exploring the bias-variance tradeoff. It is also shown that  the proposed missing-mass Fisher-scoring method reduces  the missing-mass bias and the mmMSE of the Laplace estimator. 
Future work should include  missing-mass estimation with an unknown and infinite alphabet size.
In addition, further investigation is needed 
regarding development of  lower bounds  based on the profile likelihood   that  may be tighter  than the  mmCCRB.     
\appendices
\section{Derivation of \eqref{final_CCRB}}
\label{new_app_CCRB}
In this appendix, we develop the CCRB on the MSE for pmf estimation.
First, we note that taking the logarithm of \eqref{zero} yields 
the following log-likelihood function:
\beqna
\label{log_zero}
\log p(\xvec ;\thetavec)=\sum_{m=1}^M C_{N,m}(\xvec) \log \theta_m,~\xvec\in{\cal{S}}^N.
\eeqna
By substituting the derivative of  \eqref{log_zero}  w.r.t. $\thetavec$ in \eqref{Jmat_MSE}, we obtain that 
the $(l,m)$ element of 
the FIM  is given by
\beqna
\label{FIM_MSE}
\left[ \Jmat(\thetavec) \right]_{l,m}
&=&
\frac{1}{\theta_l\theta_m}
{\rm{E}}_\thetavecsmall \left[
C_{N,l}(\xvec)C_{N,m}(\xvec)
\right]
\nonumber\\
&=&
\begin{cases}
N(N - 1)  & m\neq l \\
N^2 + N \frac{1-\theta_m}{\theta_m}  & m = l
\end{cases} ,
\eeqna
$\forall m,l=1,\ldots,M$, where  
$C_{N,m}(\xvec)$, $m=1,\ldots,M$, are defined in \eqref{Cnm_def} and the
last equality holds by using known results on the moments of the multinomial distributed variables \cite{10.2307/2237124}.
By using the elements in \eqref{FIM_MSE}, the FIM for the probability estimation model  can be written in a matrix form as
\be
\label{Jmat_MSEE}
\Jmat (\thetavec) = 
N(N-1)\onevec_M\onevec_M^T + N \left( \diag( \thetavec)
\right)^{-1}.
\ee
 It should be noted that 
$\Jmat (\thetavec)$ from \eqref{Jmat_MSEE} is a well-defined, non-singular matrix, since we assume that $\theta_m \neq 0$, $\forall m=1,\dots,M$.
By substituting (\ref{Jmat_MSEE}) in (\ref{CCRB_MSE_J}), one obtains the following closed-form CCRB on the MSE  under the constraint $\thetavec\in\Omega_\thetavecsmall$:
\beqna
\label{CCRB_reg}
{\rm{E}}_{\tilde{\thetavecsmall}} \left[
(\hat{\thetavec}-\tilde{\thetavec})
(\hat{\thetavec}-\tilde{\thetavec})^T
\right]\hspace{3.5cm}\nonumber\\
\succeq
\frac{1}{N(N-1)}\Umat(\Umat^T \onevec_M \onevec_M^T
\Umat)^{-1} \Umat^T
\nonumber\\
+ \frac{1}{N} \Umat\left(\Umat^T \left( \diag( \tilde{\thetavec})
\right)^{-1}
\Umat\right)^{-1} \Umat^T
\nonumber\\
= \frac{1}{N} \Umat\left(\Umat^T \left( \diag( \tilde{\thetavec})
\right)^{-1}
\Umat\right)^{-1} \Umat^T,
\eeqna
where the last equality is obtained by substituting   (\ref{Umat}), which implies $\onevec_M^T\Umat=\zerovec_{M-1}$. 
By applying the trace operator on the CCRB from \eqref{CCRB_reg} and using $\Umat^T\Umat=\Imat$ from (\ref{Umat}) we obtain the bound on the trace MSE in \eqref{final_CCRB1}-\eqref{final_CCRB}.
	\section{Proof of Lemma \ref{unbiasedness_prop}}
	\label{unbiasApp}
	In this appendix, we develop the missing-mass Lehmann unbiasedness.
By substituting the missing-mass 
squared-error cost function from  \eqref{cost_use}
and $\Omega_\thetavecsmall$ from \eqref{setworld} in (\ref{defdef}), one obtains that the Lehmann-unbiasedness condition for the missing-mass estimation problem is given by
\beqna
\label{opt_prob}
\sum\nolimits_{m=1}^M
{\rm{E}}_{\thetavecsmall}\left[(\hat{\theta}_m- \eta_m)^2  
{\mathbbm{1}}_{\{ m \in G_{N,0} (\xvec) \}} \right]
\hspace{1.5cm}\nonumber\\
\geq
\sum\nolimits_{m=1}^M
{\rm{E}}_{\thetavecsmall}\left[(\hat{\theta}_m- \theta_m)^2  
{\mathbbm{1}}_{\{ m \in G_{N,0} (\xvec) \}} \right] ,
\eeqna
$\forall \thetavec, \etavec \in \Omega_\thetavecsmall$. By using the definition of the constrained set in \eqref{setworld} and
 since $\Umat$ from \eqref{Umat} is the null-space matrix of this constrained set, 
	then, for a given  $\thetavec\in\Omega_\thetavecsmall$,
any $\etavec\in\Omega_\thetavecsmall$ can be written as (see, e.g. Section 4.2.4 in \cite{Boyd_2004})
\be
\label{CConection}
\etavec = \thetavec + \Umat\wvec,
\ee
where $\wvec\in \mathbb{R}^{M-1}$ is an arbitrary vector.
By substituting (\ref{CConection}) in (\ref{opt_prob}), we obtain
\beqna
\label{unbias_app3}
\sum\nolimits_{m=1}^M
{\rm{E}}_{\thetavecsmall}\left[(\hat{\theta}_m- \theta_m-\evec_m^T\Umat\wvec))^2  
{\mathbbm{1}}_{\{ m \in G_{N,0} (\xvec) \}} \right]
\nonumber\\
\geq
\sum\nolimits_{m=1}^M
{\rm{E}}_{\thetavecsmall}\left[(\hat{\theta}_m-\theta_m)^2  
{\mathbbm{1}}_{\{ m \in G_{N,0} (\xvec) \}} \right],
\eeqna
$\forall\thetavec\in\Omega_\thetavecsmall, \wvec\in\mathbb{R}^{M-1}$.
By using \eqref{probofkunseen}, the unbiasedness condition from \eqref{unbias_app3} can be rewritten as:
\beqna
\label{unbiasedness_app4}
\sum\nolimits_{m=1}^M (\evec_m^T\Umat\wvec)^2
\Pr (\xvec \in {\mathcal{A}}_m ;\thetavec) 
\hspace{3cm}
\nonumber\\
\geq
2 \sum\nolimits_{m=1}^M {\rm{E}}_\thetavecsmall 
\left[ (\hat{\theta}_m- \evec_m^T\thetavec)
{\mathbbm{1}}_{\{ m \in G_{N,0} (\xvec) \}}
\right]
\evec_m^T\Umat\wvec,
\eeqna
$\forall\thetavec\in\Omega_\thetavecsmall, \wvec\in\mathbb{R}^{M-1}$.
Since the condition in \eqref{unbiasedness_app4} should be satisfied for any $\wvec\in{\mathbb{R}}^{M-1}$,
it should be satisfied in particular for both $\wvec=\epsilon\evec_k$
and $\wvec=-\epsilon\evec_k$, where $\epsilon>0$.
By summing the separate substitution of $\wvec=\pm\epsilon\evec_k$ (that is, the result of substituting $\wvec=\epsilon\evec_k$ into \eqref{unbiasedness_app4} and the result of substituting $\wvec=\epsilon\evec_k$ into the same equation), 
we obtain the following  {\em{necessary}} condition for \eqref{unbiasedness_app4} to hold:
\be
\label{unbiasedness_app5}
\sum_{m=1}^M {\rm{E}}_\thetavecsmall\left[(\hat{\theta}_m - \theta_m)
{\mathbbm{1}}_{\{ m \in G_{N,0} (\xvec) \}}
\right]
\evec_m^T \Umat = \zerovec_{M-1}^T,
\ee
$\forall\thetavec\in\Omega_\thetavecsmall$.
Since the l.h.s. of \eqref{unbiasedness_app4}  is a quadratic term,  it can be verified that \eqref{unbiasedness_app5} is also a {\em{sufficient}} condition for unbiasedness in this case.  
Therefore, by applying the transpose operator on \eqref{unbiasedness_app5}, one obtains that the missing-mass unbiasedness in \eqref{unbiasedness_cond} is the Lehmann unbiasedness under the missing-mass squared error cost function.

\section{Proof of Theorem \ref{theoremCRB}}
\label{proofCRB}
In this appendix, we develop  the new mmCCRB from Theorem \ref{theoremCRB}. The proof is divided into: 1) the development of Lemma \ref{lemma1} in Subsection \ref{lemma1_sec}; 2) the main development of the bound based on the covariance inequality in Subsection \ref{cov_subsec}; and 3) derivation of the equality condition in Subsection \ref{equality_cond_proof}.
To this end, we define $\Gammamat(\xvec,\thetavec)$ as a diagonal $M\times M$ matrix with the following elements on its diagonal:
\be
\label{Gamma_def}
[\Gammamat(\xvec,\thetavec)]_{m,m}\define\epsilon_m (\thetavec){\mathbbm{1}}_{\{ m \in G_{N,0} (\xvec) \}},~ m=1\ldots,M,
\ee
where
\be
\label{epsilom}
\epsilon_m(\thetavec) \define \hat{\theta}_m - {\rm{E}}_\thetavecsmall \left[\hat{\theta}_m| \xvec \in {\mathcal{A}}_m \right],~m=1,\ldots,M.
\ee
\subsection{Lemma \ref{lemma1}}
\label{lemma1_sec}
In this subsection, we  prove the following Lemma:
	\begin{lemma}
	\label{lemma1}
	\beqna
		\label{52}
	{\rm{E}}_\thetavecsmall \left[ \Gammamat(\xvec,\thetavec) \Deltamat^T(\xvec,\thetavec)
	\right]
	= 
	 \Smat (\thetavec),
	\eeqna
	where $\Deltamat(\xvec,\thetavec)$,  $\Smat(\thetavec)$, and $\Gammamat(\xvec,\thetavec)$ are defined in \eqref{delta}, \eqref{Smat}, and \eqref{Gamma_def}, respectively.
	\end{lemma}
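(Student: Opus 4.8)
The plan is to exploit the fact that $\Gammamat(\xvec,\thetavec)$ is diagonal, so that the $m$th row of $\Gammamat(\xvec,\thetavec)\Deltamat^T(\xvec,\thetavec)$ equals $[\Gammamat(\xvec,\thetavec)]_{m,m}$ times the transpose of the $m$th column of $\Deltamat(\xvec,\thetavec)$. Hence it suffices to prove, for each $m=1,\ldots,M$,
\be
\label{lemma1_target}
{\rm{E}}_\thetavecsmall\left[ \epsilon_m(\thetavec)\,{\mathbf{1}}_{\{m\in G_{N,0}(\xvec)\}}\,\Deltamat_{1:M,m}^{T}(\xvec,\thetavec)\right] = \Smat_{m,1:M}(\thetavec) .
\ee
First I would note that the event $\{m\in G_{N,0}(\xvec)\}$ coincides with $\{\xvec\in{\mathcal{A}}_m\}$ by \eqref{Omega_def}, so the indicator carried by $\Gammamat$ in \eqref{Gamma_def} and the indicator in the definition \eqref{delta} of $\Deltamat_{1:M,m}$ merge into a single ${\mathbf{1}}_{\{\xvec\in{\mathcal{A}}_m\}}$. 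Substituting \eqref{delta} and the definition \eqref{epsilom} of $\epsilon_m(\thetavec)$, the left-hand side of \eqref{lemma1_target} becomes ${\rm{E}}_\thetavecsmall\big[(\hat{\theta}_m - {\rm{E}}_\thetavecsmall[\hat{\theta}_m|\xvec\in{\mathcal{A}}_m])\,\nabla^T_\thetavecsmall\log p(\xvec|\xvec\in{\mathcal{A}}_m;\thetavec)\,{\mathbf{1}}_{\{\xvec\in{\mathcal{A}}_m\}}\big]$.

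Next I would split this expectation into two terms. In the term carrying the data-independent scalar ${\rm{E}}_\thetavecsmall[\hat{\theta}_m|\xvec\in{\mathcal{A}}_m]$, I would first pass from an expectation-with-indicator to a conditional expectation multiplied by $\Pr(\xvec\in{\mathcal{A}}_m;\thetavec)$, and then invoke the vanishing conditional score mean: differentiating the identity $\sum_{\alphavecsmall\in{\mathcal{A}}_m}p(\alphavec|\alphavec\in{\mathcal{A}}_m;\thetavec)=1$ with respect to $\thetavec$ (a finite sum over ${\cal{S}}^N$, so the gradient passes inside with no regularity issue) gives ${\rm{E}}_\thetavecsmall[\nabla^T_\thetavecsmall\log p(\xvec|\xvec\in{\mathcal{A}}_m;\thetavec)\,|\,\xvec\in{\mathcal{A}}_m]=\zerovec^T$. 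Thus that term vanishes, and only ${\rm{E}}_\thetavecsmall[\hat{\theta}_m\,\nabla^T_\thetavecsmall\log p(\xvec|\xvec\in{\mathcal{A}}_m;\thetavec)\,{\mathbf{1}}_{\{\xvec\in{\mathcal{A}}_m\}}]$ survives.

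Finally I would identify the surviving term with a gradient of the conditional mean. Writing it as $\Pr(\xvec\in{\mathcal{A}}_m;\thetavec)\sum_{\alphavecsmall\in{\mathcal{A}}_m}\hat{\theta}_m(\alphavec)\,p(\alphavec|\alphavec\in{\mathcal{A}}_m;\thetavec)\,\nabla^T_\thetavecsmall\log p(\alphavec|\alphavec\in{\mathcal{A}}_m;\thetavec)$, using $p\,\nabla\log p = \nabla p$ together with the fact that $\hat{\theta}_m(\alphavec)$ does not depend on $\thetavec$, and pulling the finite sum back out through the gradient, it equals $\Pr(\xvec\in{\mathcal{A}}_m;\thetavec)\,\nabla^T_\thetavecsmall\{{\rm{E}}_\thetavecsmall[\hat{\theta}_m|\xvec\in{\mathcal{A}}_m]\}$, which is precisely the second expression for $\Smat_{m,1:M}(\thetavec)$ in \eqref{Smat}. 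Stacking these row identities over $m=1,\ldots,M$ yields \eqref{52}. I do not expect a genuine obstacle here: since ${\cal{S}}^N$ is finite, every gradient-sum interchange is automatic, and the only substantive ingredient is the zero conditional mean of the score; the one thing to keep straight is the bookkeeping between the indicator ${\mathbf{1}}_{\{\xvec\in{\mathcal{A}}_m\}}$ and conditioning on the event ${\mathcal{A}}_m$, together with the two equivalent forms of $\Smat_{m,1:M}$ given in \eqref{Smat}.
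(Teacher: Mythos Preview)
Your proposal is correct and follows essentially the same route as the paper's proof: both arguments work row by row, convert the expectation with indicator into a conditional expectation times $\Pr(\xvec\in{\mathcal{A}}_m;\thetavec)$, use the zero-mean property of the conditional score together with a gradient--sum interchange over the finite set ${\mathcal{A}}_m$, and arrive at $\Pr(\xvec\in{\mathcal{A}}_m;\thetavec)\,\nabla^T_\thetavecsmall\{{\rm{E}}_\thetavecsmall[\hat{\theta}_m\mid\xvec\in{\mathcal{A}}_m]\}$, i.e., the second form of $\Smat_{m,1:M}(\thetavec)$ in \eqref{Smat}. The only cosmetic difference is that you split $\epsilon_m$ into $\hat{\theta}_m$ and the conditional mean at the outset and kill the latter via the score identity, whereas the paper keeps $\epsilon_m$ intact, applies the product rule to $\epsilon_m\,\nabla p$, and kills the corresponding piece via $\nabla^T_\thetavecsmall\{{\rm{E}}_\thetavecsmall[\epsilon_m\mid\xvec\in{\mathcal{A}}_m]\}=\nabla^T_\thetavecsmall 0=\zerovec^T$.
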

\begin{proof}
	By substituting \eqref{delta},
	\eqref{Gamma_def}, and \eqref{epsilom} in \eqref{52}
one obtains that the $m$th row of $	{\rm{E}}_\thetavecsmall \left[ \Gammamat (\xvec,\thetavec)\Deltamat^T(\xvec,\thetavec)
	\right]$ on the r.h.s. of \eqref{52} satisfies
\beqna
	\label{prod}
	{\rm{E}}_{\thetavecsmall} \left[ \epsilon_m(\thetavec)  \Deltamat_{1:M,m}^T(\xvec,\thetavec)
	\right]\hspace{4.25cm}
	\nonumber\\
	=	{\rm{E}}_{\thetavecsmall} \left[ \epsilon_m(\thetavec) \nabla_\thetavecsmall \log p(\xvec|\xvec \in {\mathcal{A}}_m  ;\thetavec){\mathbbm{1}}_{\{ \xvec \in {\mathcal{A}}_m  \}}\right]\hspace{1.25cm}
	\nonumber\\= 
	\Pr (\xvec \in {\mathcal{A}}_m ;\thetavec)\hspace{-0.2cm}
	\sum_{\alphavecsmall\in{\mathcal{A}}_m } \epsilon_m(\thetavec) \nabla_\thetavecsmall^T  \Pr(\xvec=\alphavec| \xvec \in {\mathcal{A}}_m ;\thetavec), 
\eeqna
 $ m =1,\ldots, M$,
where the last equality stems from 
\[\nabla_\thetavecsmall \log p(\xvec|\xvec \in {\mathcal{A}}_m  ;\thetavec)=\frac{\nabla_\thetavecsmall  p(\xvec|\xvec \in {\mathcal{A}}_m  ;\thetavec)}{p(\xvec|\xvec \in {\mathcal{A}}_m ;\thetavec)},~\forall \xvec \in {\mathcal{A}}_m ,
\]
in which ${\mathcal{A}}_m$ is defined in \eqref{Omega_def}.
Then, by applying the product rule on  the r.h.s. of \eqref{prod}, we obtain
\beqna
\label{51}
	\sum_{\alphavecsmall\in{\mathcal{A}}_m } \epsilon_m(\thetavec) \nabla_\thetavecsmall^T  \Pr(\xvec=\alphavec| \xvec \in {\mathcal{A}}_m ;\thetavec)	\hspace{2cm}
	\nonumber\\= \nabla_\thetavecsmall^T \left\{ \sum_{\alphavecsmall\in{\mathcal{A}}_m } \epsilon_m(\thetavec)
	\Pr\left(\xvec=\alphavec|\xvec \in {\mathcal{A}}_m ;\thetavec\right)\right\}\hspace{0.5cm}
	\nonumber\\
	- \sum_{\alphavecsmall\in{\mathcal{A}}_m} \nabla_\thetavecsmall^T\{ 
	\epsilon_m(\thetavec)\} \Pr(\xvec=\alphavec| \xvec \in {\mathcal{A}}_m  (\xvec);\thetavec)\nonumber\\
	=\nabla_\thetavecsmall^T \left\{ 
	{\rm{E}}_\thetavecsmall \left[
	\epsilon_m(\thetavec) | \xvec \in {\mathcal{A}}_m 
	\right]\right\}\hspace{3cm}
	\nonumber\\
	- \sum_{\alphavecsmall\in{\mathcal{A}}_m} \nabla_\thetavecsmall^T\{ 
	\epsilon_m(\thetavec)\} \Pr(\xvec=\alphavec| \xvec \in {\mathcal{A}}_m  (\xvec);\thetavec).
	\eeqna
	Computing the conditional expectation of \eqref{epsilom}, given the event that $m \in G_{N,0}(\xvec)$, results in
\be
\label{epsilonzero}
{\rm{E}}_\thetavecsmall\left[\epsilon_m(\thetavec)| \xvec \in {\mathcal{A}}_m  \right] =0,~m=1,\ldots,M.
\ee
In addition, computing the gradient of  \eqref{epsilom} results in
	\be
\label{epsilom2}
\nabla_\thetavecsmall^T 
	\epsilon_m(\thetavec) = - \nabla_\thetavecsmall^T {\rm{E}}_\thetavecsmall \left[\hat{\theta}_m|\xvec \in {\mathcal{A}}_m \right],
\ee
$m=1,\ldots,M$.
 By using the conditional expectation definition, and then substituting \eqref{epsilonzero} and  \eqref{epsilom2} in \eqref{51}, one obtains
	\beqna
	\label{51_B}
		\sum_{\alphavecsmall\in{\mathcal{A}}_m } \epsilon_m(\thetavec) \nabla_\thetavecsmall^T  \Pr(\xvec=\alphavec| \xvec \in {\mathcal{A}}_m  (\xvec);\thetavec)	\hspace{1.8cm}
	\nonumber\\
	=
	 \nabla_\thetavecsmall^T 
	\left\{{\rm{E}}_\thetavecsmall \left[\hat{\theta}_m| \xvec \in {\mathcal{A}}_m \right] \right\}
	\sum_{\alphavecsmall\in{\mathcal{A}}_m}  \Pr(\xvec=\alphavec| \xvec \in {\mathcal{A}}_m  ;\thetavec) 
	\nonumber\\
	= \nabla_\thetavecsmall^T 
	\Big\{{\rm{E}}_\thetavecsmall \left[\hat{\theta}_m| \xvec \in {\mathcal{A}}_m \right] \Big\}  , \hspace{4.1cm}
	\eeqna
	where the last equality stems  from the fact that for a conditional pmf we have
	$\sum_{\alphavecsmall\in{\mathcal{A}}_m}  \Pr(\xvec=\alphavec| m\in G_{N,0} (\xvec);\thetavec) =1$.
	By substituting the definition of the missing-mass bias vector, $\bvec_{N,0}(\thetavec)$, from (\ref{bvec}) in \eqref{51_B}
	and using the fact that $\nabla_\thetavecsmall^T \theta_m=\evec_m^T$, one obtains
	\beqna
	\label{endthild}
		\sum_{\alphavecsmall\in{\mathcal{A}}_m } \epsilon_m(\thetavec) \nabla_\thetavecsmall^T  \Pr(\xvec=\alphavec|\xvec \in {\mathcal{A}}_m  (\xvec);\thetavec)\hspace{1cm}
	\nonumber\\
	= \nabla^T_\thetavecsmall \bigg\{  \frac{\left[\bvec_{N,0}(\thetavec)\right]_m}{\Pr(\xvec \in {\mathcal{A}}_m ;\thetavec)}
	\bigg\}  
	+ 	\evec_m^T, 
	\eeqna
	$ m=1,\ldots,M$.
	Substitution of \eqref{Smat} in \eqref{endthild} and then substituting the result in \eqref{prod} results in \eqref{52}.
	\end{proof}

\subsection{Covariance inequality}
\label{cov_subsec}
The following part of the proof  is along the path of the proof from \cite{Stoica_Ng_1998}
for the CCRB on the MSE in a conventional estimation problem.
Let $\Wmat \in \mathbb{R}^{M\times M}$ be an arbitrary matrix and $\tilde{\thetavec}\in\Omega_\thetavecsmall$ is a local parameter vector. Then,
	\beqna
	\label{inequal}
	\zerovec \preceq {\rm{E}}_{\tilde{\thetavecsmall}} \left[
	\left( \Gammamat(\xvec,\tilde{\thetavec}) -\Wmat^T\Umat \Umat^T \Deltamat(\xvec,\tilde{\thetavec})
	\right)\right.\hspace{1.4cm}
	\nonumber\\
	\left. \times\left( \Gammamat(\xvec,\tilde{\thetavec}) -\Wmat^T\Umat \Umat^T \Deltamat(\xvec,\tilde{\thetavec})
	\right)^T \right] \hspace{1.35cm}
	\nonumber\\
	=
	{\rm{E}}_{\tilde{\thetavecsmall}} \left[ \Gammamat(\xvec,\tilde{\thetavec})\Gammamat^T(\xvec,\tilde{\thetavec}) \right]
	- \Smat^T(\tilde{\thetavec})\Umat \Umat^T \Wmat\hspace{0.9cm}
	\nonumber\\
	- \Wmat^T
	\Umat \Umat^T \Smat(\tilde{\thetavec}) 
	 + \Wmat^T
	 \Umat \Umat^T \Jmat^{(0)}(\tilde{\thetavec})\Umat \Umat^T \Wmat, 
	\eeqna
	where we substitute (\ref{52}) from Lemma \ref{lemma1} and the mmFIM definition from \eqref{JJJdef}.
	By rearranging \eqref{inequal}, we obtain
	\beqna
	\label{maxit}
	\zerovec \preceq
	{\rm{E}}_{\tilde{\thetavecsmall}}\left[\Gammamat(\xvec,\tilde{\thetavec})\Gammamat^T(\xvec,\tilde{\thetavec})\right]
-
	\Smat^T(\tilde{\thetavec})\Umat \Umat^T \Wmat\hspace{1cm}
			\nonumber\\
-\Wmat^T\Umat \Umat^T \Smat(\tilde{\thetavec})
	+\Wmat^T\Umat \Umat^T \Jmat^{(0)}(\tilde{\thetavec})\Umat \Umat^T \Wmat.
	\eeqna
	By applying the trace operator on \eqref{maxit}, it can be verified that the matrix inequality in \eqref{maxit} provides   a family of bounds on the mmMSE from \eqref{law}, which depends on the specific choice of the matrix $\Wmat$.  Theorem \ref{theoremCRB}
is obtained by choosing 
the optimal member from this family, as described in the following.

 Since $\Umat^T\Jmat^{(0)}(\tilde{\thetavec})\Umat$ is
	a non-singular matrix under regularity Condition \ref{cond1}, then
it is shown in  \cite{Stoica_Ng_1998}
that 
 the greatest lower bound, i.e.    the supremum of the r.h.s. of \eqref{maxit}
over $\Wmat$, is obtained by a matrix $\Wmat$ which satisfies
	\be
	\label{maxhold}
	\Wmat^T\Umat=\Smat^T(\tilde{\thetavec})\Umat\left(\Umat^T\Jmat^{(0)}(\tilde{\thetavec})\Umat\right)^{-1}.
	\ee
	By substituting (\ref{maxhold}) into (\ref{maxit}), one obtains
	\beqna
	\label{59}
	{\rm{E}}_{\tilde{\thetavecsmall}}\left[\Gammamat(\xvec,\tilde{\thetavec})\Gammamat^T(\xvec,\tilde{\thetavec})\right]\hspace{4cm}
	\nonumber\\
	\succeq
	\Smat^T(\tilde{\thetavec})\Umat(\Umat^T\Jmat^{(0)}(\tilde{\thetavec})\Umat)^{-1}\Umat^T\Smat(\tilde{\thetavec}) .
	\eeqna
	By applying the trace operator on \eqref{59} and substituting \eqref{Gamma_def}, we obtain
	\beqna
	\label{68}
	\sum_{m=1}^M
{\rm{E}}_{\tilde{\thetavecsmall}}
	\left[ \epsilon_m(\tilde{\thetavec}) ^2{\mathbbm{1}}_{\{ m \in G_{N,0} (\xvec) \}} \right]\hspace{2.75cm}
	\nonumber\\
	\geq
	{\text{trace}}(\Smat^T(\tilde{\thetavec})\Umat(\Umat^T\Jmat^{(0)}(\tilde{\thetavec})\Umat)^{-1}\Umat^T\Smat(\tilde{\thetavec})).
	\eeqna

	The following equation relates  the mmMSE from \eqref{law} and the l.h.s. of
	\eqref{68}.
From \eqref{epsilom},	it  can be seen that
	\beqna
	\label{connection}
	{\rm{E}}_{\tilde{\thetavecsmall}}\left[(\hat{\theta}_m -\tilde{\theta}_m)^2 | \xvec \in {\mathcal{A}}_m  \right]
	\hspace{3.45cm}
	\nonumber\\=
		{\rm{E}}_{\tilde{\thetavecsmall}}\left[(\epsilon_m(\tilde{\thetavec}) + {\rm{E}}_{\tilde{\thetavecsmall}} \left[\hat{\theta}_m| \xvec \in {\mathcal{A}}_m \right] -\tilde{\theta}_m)^2 |\xvec \in {\mathcal{A}}_m  \right]
		\nonumber\\
	=
		{\rm{E}}_{\tilde{\thetavecsmall}}\left[\epsilon_m^2(\tilde{\thetavec})  | \xvec \in {\mathcal{A}}_m  \right]+( {\rm{E}}_{\tilde{\thetavecsmall}} \left[\hat{\theta}_m| \xvec \in {\mathcal{A}}_m \right] -\tilde{\theta}_m)^2
		\nonumber\\
		=
		{\rm{E}}_{\tilde{\thetavecsmall}}\left[\epsilon_m^2(\tilde{\thetavec})  | \xvec \in {\mathcal{A}}_m  \right]+\left( 
\frac{\left[\bvec_{N,0}(\tilde{\thetavec}) \right]_m }{\Pr ( \xvec \in {\mathcal{A}}_m ;\tilde{\thetavec})}\right)^2
	,\hspace{0.25cm}
		\eeqna
			where the second equality stems from  \eqref{epsilonzero} and the last equality is obtained by substituting 
\eqref{bvec}.
By multiplying  \eqref{connection} by 
$\Pr (\xvec \in {\mathcal{A}}_m;\tilde{\thetavec})$,
then summing the result over  $ m=1,\ldots, M$, and substituting
the result in  \eqref{68}, one obtains
the mmBCRB on the mmMSE of biased estimator evaluated at the local point, $\tilde{\thetavec}$, in  \eqref{TheBound}-\eqref{CRB}.

\subsection{Derivation of the equality condition in \eqref{CRB_equality}}
\label{equality_cond_proof}
According to Cauchy-Schwartz inequality properties (or covariance inequality properties),  
equality in  (\ref{inequal}) 
for  $\Wmat$ that satisfies (\ref{maxhold}) holds if 
	\beqna
	\label{zero_app1}
	{\rm{E}}_{\tilde{\thetavecsmall}} \big[(\Gammamat(\xvec,\tilde{\thetavec})-\Smat^T(\tilde{\thetavec})\Umat(\Umat^T\Jmat^{(0)}(\tilde{\thetavec})\Umat)^{-1} \Umat^T\Deltamat(\xvec,\tilde{\thetavec}))
	\nonumber\\
	\times
	(\Gammamat(\xvec,\tilde{\thetavec})-\Smat^T(\tilde{\thetavec})\Umat(\Umat^T\Jmat^{(0)}(\tilde{\thetavec})\Umat)^{-1} \Umat^T\Deltamat(\xvec,\tilde{\thetavec}))^T \big]
	\nonumber\\=\zerovec.
	\eeqna
	The condition in \eqref{zero_app1} holds if
	\be
	\Gammamat(\xvec,\tilde{\thetavec}) =\Smat^T(\tilde{\thetavec})\Umat(\Umat^T\Jmat^{(0)}(\tilde{\thetavec})\Umat)^{-1} \Umat^T\Deltamat(\xvec,\tilde{\thetavec})
	\ee
	which, by using \eqref{Gamma_def}, implies that
	\be
	\label{CRB_equality_app}
	\epsilon_m(\tilde{\thetavec}) = \left[\Smat(\tilde{\thetavec})^T\Umat(\Umat^T\Jmat^{(0)}(\tilde{\thetavec})\Umat)^{-1} \Umat^T\Deltamat(\xvec,\tilde{\thetavec}) \right]_{m,m},
	\ee
for any 	$ m=1,\ldots,M$ such that $m\in G_{N,0}\xvec $. 
	By substituting \eqref{bvec} and \eqref{epsilom} in \eqref{CRB_equality_app} we get \eqref{CRB_equality}.

\section{Derivation of the FIM in \eqref{matrixFIM}}
\label{FIMappendix}
In this appendix, we develop the closed-form mmFIM,   defined in \eqref{JJJdef},  for the observation model from \eqref{zero}. 
By substituting \eqref{bayesmm} in \eqref{delta}, one obtains
\beqna
\label{delta_bayes2}
\Deltamat_{1:M,m}(\xvec,\thetavec)= \hspace{5cm}\nonumber\\
\nabla_\thetavecsmall  
\left(\sum_{l=1}^M C_{N,l}(\xvec) \log  \theta_l 
-N \log (1 - \theta_m)\right)
{\mathbbm{1}}_{\{ m \in G_{N,0} (\xvec) \}}
\nonumber\\
=\left(	\vvec^T(\xvec,\thetavec) -\frac{N}{1 - \theta_m}\evec_m^T\right){\mathbbm{1}}_{\{ m \in G_{N,0} (\xvec) \}},\hspace{1.9cm}
\eeqna
where $\vvec(\xvec,\thetavec)$ is defined in \eqref{vvec}.
By substituting 
\eqref{delta_bayes2} in 
 the mmFIM definition from \eqref{JJJdef}, we obtain that the $(k,l)$th element of the mmFIM for our model is given by
\beqna
\label{JJ}
\left[ \Jmat^{(0)} (\thetavec) \right]_{k,l}
=\sum_{m=1}^M {\rm{E}}_\thetavecsmall \left[
\left[\Deltamat(\xvec,\thetavec)\right]_{k,m} \left[\Deltamat(\xvec,\thetavec)\right]_{l,m} \right]\hspace{0.75cm}
\nonumber\\
=
\sum_{m=1}^M
{\rm{E}}_\thetavecsmall\left[
\left( \frac{C_{N,k}(\xvec)}{\theta_k}+\frac{N}{1-\theta_m}\delta_{k,m} \right) \right.
\hspace{1cm}
\nonumber\\
\times
\left. \left( \frac{C_{N,l}(\xvec)}{\theta_l}+\frac{N}{1-\theta_m}\delta_{l,m} \right)
{\mathbbm{1}}_{\{ m \in G_{N,0} (\xvec) \}}
\right]
\nonumber\\
= 
\sum_{m=1}^M \left\{
\frac{1}{\theta_k\theta_l}
{\rm{E}}_{\thetavecsmall}  \left[ C_{N,k}(\xvec) C_{N,l}(\xvec) | m\in G_{N,0}(\xvec) \right]  \right.
\nonumber\\
+
\frac{N \delta_{k,m}}{(1-\theta_m)\theta_l}
{\rm{E}}_{\thetavecsmall}  \left[ C_{N,l}(\xvec) | \xvec \in {\mathcal{A}}_m  \right]\hspace{1cm}
\nonumber\\
+ \frac{N \delta_{l,m}}{(1-\theta_m)\theta_k}
{\rm{E}}_{\thetavecsmall}  \left[ C_{N,k}(\xvec)| \xvec \in {\mathcal{A}}_m  \right]\hspace{0.85cm}
\nonumber\\
+ \left.
\frac{N^2 \delta_{k,m}\delta_{l,m}}{(1-\theta_m)^2} 
\right\} \Pr(\xvec \in {\mathcal{A}}_m ;\thetavec), \hspace{0.75cm}
\eeqna	
where the last equality is obtained by using  the law of total probability.
In the following, we compute the conditional expectation terms from \eqref{JJ}.
First, 
it can be seen that
\beqna
\label{explain}
\sum_{\alphavecsmall\in{\mathcal{A}}_m } \Pr (\xvec = \alphavec;\thetavec )=
\Pr (\xvec \in {\mathcal{A}}_m ;\thetavec)
=
\left(\sum_{n=1,n\neq m}^N \theta_n\right)^N,
\eeqna
$m=1,\ldots,M$,
where ${\mathcal{A}}_m$ is defined in \eqref{Omega_def} and where the probability of the $m$th element  to be unobserved on the r.h.s. of \eqref{explain} is an alternative way of writing the r.h.s of 
\eqref{probofkunseen}.
Then, by using (\ref{bayesmm}), it can be verified that
\beqna
\label{howmoment1}
{\rm{E}}_{\thetavecsmall}  \left [ C_{N,k}(\xvec)  | \xvec \in {\mathcal{A}}_m   \right]\hspace{3.5cm}
\nonumber\\
=
\sum_{\alphavecsmall\in{\mathcal{A}}_m } C_{N,k}(\alphavec) 
\frac{\prod_{n=1}^M \theta_n^{C_{N,n}(\alphavecsmall)}}{(1-\theta_m)^N} \hspace{0.9cm}
\nonumber\\
= \frac{\theta_k}{(1-\theta_m)^N}\sum_{\alphavecsmall\in{\mathcal{A}}_m }  \frac{\partial}{\partial\theta_k}
\prod_{n=1}^M \theta_n^{C_{N,n}(\alphavecsmall)} 
\hspace{0.2cm}
\nonumber\\
= \frac{\theta_k}{(1-\theta_m)^N} \frac{\partial}{\partial\theta_k}
\sum_{\alphavecsmall\in{\mathcal{A}}_m } \Pr (\xvec = \alphavec;\thetavec ),
\eeqna
$m,k=1,\ldots,M$.
It should be noted that the last equality in \eqref{howmoment1} is only valid if we use
the probability term, $\Pr (\xvec = \alphavec;\thetavec )$, as it is written in \eqref{explain}.
By substituting \eqref{explain} in \eqref{howmoment1}, one obtains
\beqna
\label{howmoment1_add}
{\rm{E}}_{\thetavecsmall}  \left [ C_{N,k}(\xvec)  |\xvec \in {\mathcal{A}}_m   \right]\hspace{3cm}
\nonumber\\
= \frac{\theta_k}{(1-\theta_m)^N} \frac{\partial}{\partial\theta_k} \Big(\sum_{n=1,n\neq m}^N \theta_n\Big)^N 
\nonumber\\
=
\begin{cases}
	\frac{N \theta_k}{1-\theta_m}  & m \neq k \\
		0  &  m = k 
\end{cases}. \hspace{2.1cm}
\eeqna
Similar to the derivation of (\ref{howmoment1}), by using (\ref{bayesmm}), we obtain
 \beqna
 \label{Cnml}
{\rm{E}}_{\thetavecsmall}  \left [ C_{N,k}(\xvec)C_{N,l}(\xvec)  |\xvec \in {\mathcal{A}}_m   \right]
\hspace{2.8cm}
\nonumber\\
=
\sum_{\alphavecsmall\in{\mathcal{A}}_m } C_{N,k}(\alphavec)C_{N,l} (\alphavec)
\frac{\prod_{n=1}^M \theta_n^{C_{N,n}(\alphavec)}}{(1-\theta_m)^N} \hspace{2cm}
\nonumber\\
= \sum_{\alphavecsmall\in{\mathcal{A}}_m } \frac{\theta_k\theta_l}{(1-\theta_m)^N} \frac{\partial^2}{\partial\theta_k\partial\theta_l}
\prod_{n=1}^M \theta_n^{C_{N,n}(\alphavec)} \hspace{2cm}
\nonumber\\
+ 
\frac{\delta_{k,l}}{(1-\theta_m)^N} \sum_{\alphavecsmall\in{\mathcal{A}}_m } C_{N,k}(\alphavec)
\prod_{n=1}^M \theta_n^{C_{N,n}(\alphavec)} \hspace{1.75cm}
\nonumber\\
= \frac{\theta_k\theta_l}{(1-\theta_m)^N} \frac{\partial^2}{\partial\theta_k\partial\theta_l}
\sum_{\alphavecsmall\in{\mathcal{A}}_m } \Pr (\xvec = \alphavec;\thetavec )
\hspace{1.75cm}
\nonumber\\
+\frac{N\delta_{k,l}}{1-\theta_m}{\rm{E}}_{\thetavecsmall}  \left [ C_{N,k}(\xvec)  | \xvec \in {\mathcal{A}}_m   \right], \hspace{1.75cm}
\eeqna
where we replace the order of the derivative and the sum, and we use
\eqref{zero}. Again, it should be noted that the last equality in \eqref{Cnml} is only valid if we use 
the probability term, $\Pr (\xvec = \alphavec;\thetavec )$, in \eqref{explain}.
By substituting \eqref{explain} and \eqref{howmoment1_add} in \eqref{Cnml}, one obtains
\beqna
{\rm{E}}_{\thetavecsmall}  \left [ C_{N,k}(\xvec)C_{N,l}(\xvec)  | \xvec \in {\mathcal{A}}_m   \right]=
\hspace{3cm}
\nonumber\\
 \frac{\theta_k\theta_l}{(1-\theta_m)^N} \frac{\partial^2}{\partial\theta_k\partial\theta_l} \Big(\sum_{n=1,n\neq m}^N \theta_n\Big)^N
+ \frac{N\theta_k \delta_{k,l}(1-\delta_{m,k})}{1-\theta_m}
\nonumber\\
= 
\begin{cases}
	\frac{N(N-1) \theta_k\theta_l}{(1-\theta_m)^2}   &  {\text{if }}m \neq k,~ l \neq m,~ k \neq l\\
		\frac{N(N-1) \theta_k^2}{(1-\theta_m)^2} +  \frac{N \theta_k}{1-\theta_m} 
		 & {\text{if }} k = l \neq m \\
		0 & {\text{otherwise}}
\end{cases}.
\label{moment2}
\eeqna
Thus, by substituting \eqref{probofkunseen}, \eqref{howmoment1_add}, and \eqref{moment2} in \eqref{JJ}, one obtains that the elements of the mmFIM are given by
\beqna
\label{elements_kk}
[\Jmat^{(0)}(\thetavec)]_{k,k}
\hspace{5.5cm}
\nonumber\\
\sum_{m=1,m\neq k}^M \left(\frac{N(N-1)}{(1-\theta_m)^2} +  \frac{N}{\theta_k(1-\theta_m)} \right)(1-\theta_m)^N 
\nonumber\\
+ \frac{N^2}{(1-\theta_k)^2} (1-\theta_k)^N  \hspace{4cm}
\nonumber\\
= \sum_{m=1}^M \left(\frac{N(N-1)}{(1-\theta_m)^2} +  \frac{N}{\theta_k(1-\theta_m)} \right)(1-\theta_m)^N 
\nonumber\\
+ \frac{N}{(1-\theta_k)^2} (1-\theta_k)^N  \hspace{3.25cm}
\nonumber\\
- \frac{N}{\theta_k(1-\theta_k)} (1-\theta_k)^N , \hspace{3cm}
\eeqna
for $k=1,\ldots,M$,
and 
\beqna
\label{elements_k_neq_l}
[\Jmat^{(0)}(\thetavec)]_{k,l}
=
\sum_{m=1,m\neq k,m\neq l}^M  \frac{N(N-1)}{(1-\theta_m)^2}
(1-\theta_m)^N  \hspace{0.5cm}
\nonumber\\
+ \frac{N^2}{(1-\theta_k)^2} (1-\theta_k)^N   + \frac{N^2}{(1-\theta_l)^2} (1-\theta_l)^N  \hspace{1.5cm}
\nonumber\\
 = \sum_{m=1}^M \frac{N(N-1)}{(1-\theta_m)^2} (1-\theta_m)^N 
 \nonumber\\
 + \frac{N}{(1-\theta_k)^2} (1-\theta_k)^N  + \frac{N}{(1-\theta_l)^2} (1-\theta_l)^N ,
\eeqna
for $k,l=1,\ldots,M$, $k\neq l$.
By rearranging the elements in \eqref{elements_kk} and \eqref{elements_k_neq_l}, and substituting \eqref{probofkunseen}, we obtain the closed-form matrix $\Jmat^{(0)}(\thetavec)$ in its matrix representation in \eqref{matrixFIM}.
\beqna
D_{m,m}= - \frac{(1-\theta_m)^N  }{(1-\theta_m)^2} 
 +\sum_{l=1,l\neq m}^M   \frac{(1-\theta_l)^N }{\theta_m(1-\theta_l)}. 
\eeqna
\section{Proof of Lemma \ref{Smat_theorem}}
\label{proofSmat}
Taking the logarithm of \eqref{bayesmm} yields 
the following conditional log-likelihood function:
\beqna
\label{cond_log_zero}
\log p(\xvec | \xvec \in {\mathcal{A}}_m  ;\thetavec)=
\sum_{l=1}^M C_{N,l}(\xvec) \log \theta_l-N\log (1 - \theta_m),
\eeqna
for any $\xvec \in {\mathcal{A}}_m$.
By substituting \eqref{epsilom} and the derivative of  \eqref{cond_log_zero}  w.r.t. $\thetavec$ in \eqref{prod}, we obtain
\beqna
	\label{prod_after_iid}
	{\rm{E}}_{\thetavecsmall} \left[ \epsilon_m(\thetavec)  \Deltamat_{1:M,m}^T(\xvec,\thetavec)
	\right]
	=	{\rm{E}}_{\thetavecsmall} \left[ (\hat{\theta}_m -{\rm{E}}_\thetavecsmall [\hat{\theta}_m| \xvec \in {\mathcal{A}}_m ])
	\right.
	\nonumber\\
	\left.
\times	(\vvec^T(\xvec,\thetavec)+N\evec_m^T\frac{1}{1 - \theta_m}){\mathbbm{1}}_{\{ \xvec \in {\mathcal{A}}_m  \}}\right]
		\nonumber\\
		=		{\rm{E}}_{\thetavecsmall} \left[ (\hat{\theta}_m -\theta_m)	(\vvec^T(\xvec,\thetavec)+N\evec_m^T\frac{1}{1 - \theta_m}){\mathbbm{1}}_{\{ \xvec \in {\mathcal{A}}_m  \}}\right]
		\nonumber\\
-[\bvec_{N,0}(\thetavec) ]_m \left({\rm{E}}_{\thetavecsmall} \left[
	\vvec^T(\xvec,\thetavec)| \xvec \in {\mathcal{A}}_m \right]+N\evec_m^T\frac{1}{1 - \theta_m}\right),
\eeqna
where the last equality is obtained by substituting 
\eqref{bvec}.
By substituting \eqref{bvec} and \eqref{howmoment1_add} in \eqref{prod_after_iid}, we obtain
\beqna
	\label{prod_after_iid2}
	{\rm{E}}_{\thetavecsmall} \left[ \epsilon_m(\thetavec)  \Deltamat_{1:M,m}^T(\xvec,\thetavec)
	\right]\hspace{4.25cm}
	\nonumber\\
		=		{\rm{E}}_{\thetavecsmall} \left[ (\hat{\theta}_m -\theta_m)	\vvec^T(\xvec,\thetavec){\mathbbm{1}}_{\{ \xvec \in {\mathcal{A}}_m  \}}\right]\hspace{2.25cm}
				\nonumber\\
		+\frac{N}{1 - \theta_m}[\bvec_{N,0}(\thetavec) ]_m\evec_m^T
-N\frac{1}{1 - \theta_m}[\bvec_{N,0}(\thetavec) ]_m \onevec_M^T,
\eeqna
$m=1,\ldots,M$. 
By substituting \eqref{prod_after_iid2} in \eqref{52},
we obtain \eqref{Smat_final}.
\small


\begin{thebibliography}{10}
\providecommand{\url}[1]{#1}
\csname url@samestyle\endcsname
\providecommand{\newblock}{\relax}
\providecommand{\bibinfo}[2]{#2}
\providecommand{\BIBentrySTDinterwordspacing}{\spaceskip=0pt\relax}
\providecommand{\BIBentryALTinterwordstretchfactor}{4}
\providecommand{\BIBentryALTinterwordspacing}{\spaceskip=\fontdimen2\font plus
\BIBentryALTinterwordstretchfactor\fontdimen3\font minus
  \fontdimen4\font\relax}
\providecommand{\BIBforeignlanguage}[2]{{%
\expandafter\ifx\csname l@#1\endcsname\relax
\typeout{** WARNING: IEEEtran.bst: No hyphenation pattern has been}%
\typeout{** loaded for the language `#1'. Using the pattern for}%
\typeout{** the default language instead.}%
\else
\language=\csname l@#1\endcsname
\fi
#2}}
\providecommand{\BIBdecl}{\relax}
\BIBdecl

\bibitem{robbins1968estimating}
H.~Robbins, ``Estimating the total probability of the unobserved outcomes of an
  experiment,'' \emph{The Annals of Mathematical Statistics}, vol.~39, no.~1,
  pp. 256--257, 1968.

\bibitem{Good_1953}
I.~J. Good, ``The population frequencies of species and the estimation of
  population parameters,'' \emph{Biometrika}, vol.~40, pp. 237--264, 1953.

\bibitem{EFRON_THISTED_1976}
B.~Efron and R.~Thisted, ``{Estimating the number of unseen species: How many
  words did {S}hakespeare know?}'' \emph{Biometrika}, vol.~63, no.~3, pp.
  435--447, 12 1976.

\bibitem{Budianu_Tong_2004}
C.~Budianu and L.~Tong, ``Good-{T}uring estimation of the number of operating
  sensors: a large deviations analysis,'' in \emph{Proc. ICASSP}, vol.~2, May
  2004, pp. 1029--1032.

\bibitem{Budianu_Ben_David_Tong_2006}
C.~Budianu, S.~Ben-David, and L.~Tong, ``Estimation of the number of operating
  sensors in large-scale sensor networks with mobile access,'' \emph{IEEE
  Trans. Signal Processing}, vol.~54, no.~5, pp. 1703--1715, May 2006.

\bibitem{Chen_Goodman_1996}
S.~F. Chen and J.~Goodman, ``An empirical study of smoothing techniques for
  language modeling,'' in \emph{Annual Meeting on Association for Computational
  Linguistics}, 1996, pp. 310--318.

\bibitem{katz1987estimation}
S.~Katz, ``Estimation of probabilities from sparse data for the language model
  component of a speech recognizer,'' \emph{IEEE Trans. acoustics, speech, and
  signal processing}, vol.~35, no.~3, pp. 400--401, 1987.

\bibitem{Orlitsky427}
A.~Orlitsky, N.~P. Santhanam, and J.~Zhang, ``Always {G}ood {T}uring:
  Asymptotically optimal probability estimation,'' \emph{Science}, vol. 302,
  no. 5644, pp. 427--431, 2003.

\bibitem{unseenVV}
G.~Valiant and P.~Valiant, ``Estimating the unseen: An n/log(n)-sample
  estimator for entropy and support size, shown optimal via new {CLT}s,'' in
  \emph{the 43rd ACM Symposium on Theory of Computing}, 2011, pp. 685--694.

\bibitem{Hero_constraint}
J.~Gorman and A.~Hero, ``Lower bounds for parametric estimation with
  constraints,'' \emph{IEEE Trans. Information Theory}, vol.~36, no.~6, pp.
  1285--1301, Nov. 1990.

\bibitem{Stoica_Ng_1998}
P.~Stoica and B.~C. Ng, ``On the {C}ram$\acute{\text{e}}$r-{R}ao bound under
  parametric constraints,'' \emph{IEEE Signal Processing Letters}, vol.~5,
  no.~7, pp. 177--179, July 1998.

\bibitem{Moore_Sadler_Kozick2008}
T.~J. Moore, B.~M. Sadler, and R.~J. Kozick, ``Maximum-likelihood estimation,
  the {C}ram$\acute{\text{e}}$r-{R}ao bound, and the method of scoring with
  parameter constraints,'' \emph{IEEE Trans. Signal Processing}, vol.~56,
  no.~3, pp. 895--908, 2008.

\bibitem{Nitzan_Routtenberg_Tabrikian2019}
E.~Nitzan, T.~Routtenberg, and J.~Tabrikian, ``Cram$\acute{\text{e}}$r-{R}ao
  bound for constrained parameter estimation using {L}ehmann-unbiasedness,''
  \emph{IEEE Trans. Signal Processing}, vol.~67, no.~3, pp. 753--768, Feb 2019.

\bibitem{Nitzan_Routtenberg_Tabrikian_letter}
E.~{Nitzan}, T.~{Routtenberg}, and J.~{Tabrikian},
  ``Cram$\acute{\text{e}}$r-{R}ao bound under norm constraint,'' \emph{IEEE
  Signal Processing Letters}, vol.~26, no.~9, pp. 1393--1397, 2019.

\bibitem{laplace2012pierre}
P.~S. Laplace, \emph{Pierre-Simon Laplace Philosophical Essay on Probabilities:
  Translated from the fifth French edition of 1825 With Notes by the
  Translator}.\hskip 1em plus 0.5em minus 0.4em\relax Springer Science \&
  Business Media, 2012, vol.~13.

\bibitem{Nadas_1985}
A.~Nadas, ``On {T}uring's formula for word probabilities,'' \emph{IEEE Trans.
  Acoustics, Speech, and Signal Processing}, vol.~33, no.~6, pp. 1414--1416,
  Dec. 1985.

\bibitem{gale1995good}
W.~A. Gale and G.~Sampson, ``Good-{T}uring frequency estimation without
  tears,'' \emph{Journal of quantitative linguistics}, vol.~2, no.~3, pp.
  217--237, 1995.

\bibitem{Braess_Sauer}
D.~Braess and T.~Sauer, ``Bernstein polynomials and learning theory,''
  \emph{Journal of Approximation Theory}, vol. 128, no.~2, pp. 187 -- 206,
  2004.

\bibitem{Krichevsky_Trofimov}
R.~Krichevsky and V.~Trofimov, ``The performance of universal encoding,''
  \emph{IEEE Trans. Information Theory}, vol.~27, no.~2, pp. 199--207, March
  1981.

\bibitem{Gale_Church_1994}
W.~Gale and K.~Church, ``What`s wrong with adding one,'' \emph{Corpus-Based
  Research into Language: In honour of Jan Aarts}, pp. 189--200, 1994.

\bibitem{burnham1979robust}
K.~P. Burnham and W.~S. Overton, ``Robust estimation of population size when
  capture probabilities vary among animals,'' \emph{Ecology}, vol.~60, no.~5,
  pp. 927--936, 1979.

\bibitem{good1956number}
I.~Good and G.~Toulmin, ``The number of new species, and the increase in
  population coverage, when a sample is increased,'' \emph{Biometrika},
  vol.~43, no. 1-2, pp. 45--63, 1956.

\bibitem{mao2007estimating}
C.~X. Mao and B.~G. Lindsay, ``Estimating the number of classes,'' \emph{the
  Annals of Statistics}, pp. 917--930, 2007.

\bibitem{acharya2017unified}
J.~Acharya, H.~Das, A.~Orlitsky, and A.~T. Suresh, ``A unified maximum
  likelihood approach for estimating symmetric properties of discrete
  distributions,'' in \emph{International Conference on Machine Learning
  (PMLR)}, 2017, pp. 11--21.

\bibitem{orlitsky2004modeling}
A.~Orlitsky, N.~P. Santhanam, K.~Viswanathan, and J.~Zhang, ``On modeling
  profiles instead of values,'' in \emph{Proceedings of the 20th conference on
  Uncertainty in artificial intelligence}, 2004, pp. 426--435.

\bibitem{hao2019broad}
Y.~Hao and A.~Orlitsky, ``The broad optimality of profile maximum likelihood,''
  \emph{Advances in Neural Information Processing Systems}, vol.~32, pp.
  10\,991--11\,003, 2019.

\bibitem{NIPS2015_5762}
A.~Orlitsky and A.~T. Suresh, ``Competitive distribution estimation: Why is
  {G}ood-{T}uring good,'' in \emph{Advances in Neural Information Processing
  Systems 28}, C.~Cortes, N.~D. Lawrence, D.~D. Lee, M.~Sugiyama, and
  R.~Garnett, Eds.\hskip 1em plus 0.5em minus 0.4em\relax Curran Associates,
  Inc., 2015, pp. 2143--2151.

\bibitem{Juang_Lo_1994}
B.~H. Juang and S.~H. Lo, ``On the bias of the {T}uring-{G}ood estimate of
  probabilities,'' \emph{IEEE Trans. Signal Processing}, vol.~42, no.~2, pp.
  496--498, Feb. 1994.

\bibitem{mcallester2000convergence}
D.~A. McAllester and R.~E. Schapire, ``On the convergence rate of
  {G}ood-{T}uring estimators,'' in \emph{Proceedings of the Thirteenth Annual
  Conference on Computational Learning Theory}.\hskip 1em plus 0.5em minus
  0.4em\relax Morgan Kaufmann Publishers Inc., 2000, pp. 1--6.

\bibitem{Wagner_Kulkarni2011}
A.~B. Wagner, P.~Viswanath, and S.~R. Kulkarni, ``Probability estimation in the
  rare-events regime,'' \emph{IEEE Trans. Information Theory}, vol.~57, no.~6,
  pp. 3207--3229, June 2011.

\bibitem{Acharya_2018}
J.~{Acharya}, Y.~{Bao}, Y.~{Kang}, and Z.~{Sun}, ``Improved bounds for minimax
  risk of estimating missing mass,'' in \emph{International Symposium on
  Information Theory (ISIT)}, 2018, pp. 326--330.

\bibitem{BEREND20121102}
D.~Berend and A.~Kontorovich, ``The missing mass problem,'' \emph{Statistics \&
  Probability Letters}, vol.~82, no.~6, pp. 1102 -- 1110, 2012.

\bibitem{berend2013concentration}
------, ``On the concentration of the missing mass,'' \emph{Electronic
  Communications in Probability}, vol.~18, 2013.

\bibitem{YATRACOS1995321}
Y.~G. Yatracos, ``On the rare species of a population,'' \emph{Journal of
  Statistical Planning and Inference}, vol.~48, no.~3, pp. 321 -- 329, 1995.

\bibitem{rajaraman2017minimax}
N.~Rajaraman, A.~Thangaraj, and A.~T. Suresh, ``Minimax risk for missing mass
  estimation,'' in \emph{IEEE International Symposium on Information Theory
  (ISIT)}, 2017, pp. 3025--3029.

\bibitem{Routtenberg_Tong_est_after_sel}
T.~Routtenberg and L.~Tong, ``Estimation after parameter selection: Performance
  analysis and estimation methods,'' \emph{IEEE Trans. Signal Processing},
  vol.~64, no.~20, pp. 5268--5281, Oct. 2016.

\bibitem{Meir_Routtenberg_journal}
E.~Meir and T.~Routtenberg, ``{C}ram$\acute{\text{e}}$r-{R}ao bound for
  estimation after model selection and its application to sparse vector
  estimation,'' \emph{IEEE Trans. Signal Process.}, vol.~69, pp. 2284--2301,
  Mar. 2021.

\bibitem{harel2019low}
N.~Harel and T.~Routtenberg, ``Low-complexity methods for estimation after
  parameter selection,'' \emph{IEEE Trans. Signal Processing}, vol.~68, pp.
  1152--1167, 2020.

\bibitem{harel2021bayesian}
------, ``Bayesian post-model-selection estimation,'' \emph{IEEE Signal
  Process. Lett.}, vol.~28, pp. 175--179, Jan. 2021.

\bibitem{Weiss_Routtenberg_Messer}
T.~Weiss, T.~Routtenberg, and H.~Messer, ``Total performance evaluation of
  intensity estimation after detection,'' \emph{Signal Processing}, vol. 183,
  p. 108042, 2021.

\bibitem{sparse_con}
Z.~Ben-Haim and Y.~C. Eldar, ``The {C}ram$\acute{\text{e}}$r-{R}ao bound for
  estimating a sparse parameter vector,'' \emph{IEEE Trans. Signal Processing},
  vol.~58, no.~6, pp. 3384--3389, June 2010.

\bibitem{Lehmann}
E.~L. Lehmann and J.~P. Romano, \emph{Testing {S}tatistical {H}ypotheses},
  3rd~ed.\hskip 1em plus 0.5em minus 0.4em\relax New York: Springer Texts in
  Statistics, 2005.

\bibitem{ben2009constrained}
Z.~Ben-Haim and Y.~C. Eldar, ``On the constrained
  {C}ram$\acute{\text{e}}$r-{R}ao bound with a singular {F}isher information
  matrix,'' \emph{IEEE Signal Processing Letters}, vol.~16, no.~6, pp.
  453--456, 2009.

\bibitem{cohen1990admissibility}
A.~Cohen and H.~B. Sackrowitz, ``Admissibility of estimators of the probability
  of unobserved outcomes,'' \emph{Annals of the Institute of Statistical
  Mathematics}, vol.~42, no.~4, pp. 623--636, 1990.

\bibitem{gini1996estimation}
F.~Gini, ``Estimation strategies in the presence of nuisance parameters,''
  \emph{Signal processing}, vol.~55, no.~2, pp. 241--245, 1996.

\bibitem{Bar_Tabrikian_PartI}
S.~Bar and J.~Tabrikian, ``Bayesian estimation in the presence of deterministic
  nuisance parameters; {P}art {I}: Performance bounds,'' \emph{IEEE Trans.
  Signal Processing}, vol.~63, no.~24, pp. 6632--6646, Dec. 2015.

\bibitem{pavlichin2019approximate}
D.~S. Pavlichin, J.~Jiao, and T.~Weissman, ``Approximate profile maximum
  likelihood.'' \emph{Journal of Machine Learning Research}, vol.~20, no. 122,
  pp. 1--55, 2019.

\bibitem{hao2018learning}
Y.~Hao, A.~Orlitsky, and V.~Pichapati, ``On learning {M}arkov chains,'' in
  \emph{Advances in Neural Information Processing Systems}, 2018, pp. 648--657.

\bibitem{skorski2020missing}
M.~Skorski, ``Missing mass concentration for {M}arkov chains,'' \emph{arXiv
  preprint arXiv:2001.03603}, 2020.

\bibitem{Kay_estimation}
S.~M. Kay, \emph{Fundamentals of Statistical Signal Processing: Estimation
  Theory}.\hskip 1em plus 0.5em minus 0.4em\relax Upper Saddle River, NJ, USA:
  Prentice Hall, 1993.

\bibitem{PCRB_J}
T.~Routtenberg and J.~Tabrikian, ``Non-{B}ayesian periodic
  {C}ram$\acute{\text{e}}$r-{R}ao bound,'' \emph{IEEE Trans. Signal
  Processing}, vol.~61, no.~4, pp. 1019--1032, Feb. 2013.

\bibitem{Routtenberg_cyclic}
------, ``Cyclic {B}arankin-type bounds for non-{B}ayesian periodic parameter
  estimation,'' \emph{IEEE Trans. Signal Processing}, vol.~62, no.~13, pp.
  3321--3336, July 2014.

\bibitem{lo1992species}
S.-H. Lo, ``From the species problem to a general coverage problem via a new
  interpretation,'' \emph{The Annals of Statistics}, pp. 1094--1109, 1992.

\bibitem{book_engen}
S.~Engen, \emph{Stochastic abundance models, with emphasis on biological
  communities and species diversity}.\hskip 1em plus 0.5em minus 0.4em\relax
  London: Chapman and Hall; New York: Wiley, 1978.

\bibitem{Nitzan_ICASSP}
E.~Nitzan, T.~Routtenberg, and J.~Tabrikian, ``Optimal biased estimation using
  {L}ehmann-unbiasedness,'' in \emph{Proc. ICASSP}, 2017, pp. 4496--4500.

\bibitem{yonina_penalized}
Y.~C.~Eldar, ``Minimum variance in biased estimation: bounds and asymptotically
  optimal estimators,'' \emph{IEEE Trans. Signal Processing}, vol.~52, no.~7,
  pp. 1915--1930, July 2004.

\bibitem{kay2008rethinking}
S.~Kay and Y.~C.~Eldar, ``Rethinking biased estimation,'' \emph{IEEE Signal
  Processing Magazine}, vol.~25, no.~3, pp. 133--136, May 2008.

\bibitem{zamir1998proof}
R.~Zamir, ``A proof of the {F}isher information inequality via a data
  processing argument,'' \emph{IEEE Trans. Inf. Theory}, vol.~44, no.~3, pp.
  1246--1250, 1998.

\bibitem{berisha2014empirical}
V.~Berisha and A.~O. Hero, ``Empirical non-parametric estimation of the
  {F}isher information,'' \emph{IEEE Signal Processing Letters}, vol.~22,
  no.~7, pp. 988--992, 2014.

\bibitem{rao1973linear}
C.~R. Rao, \emph{Linear statistical inference and its applications}.\hskip 1em
  plus 0.5em minus 0.4em\relax Wiley New York, 1973, vol.~2.

\bibitem{whipps2016constrained}
G.~T. Whipps, E.~Ertin, and R.~L. Moses, ``Constrained {F}isher scoring for a
  mixture of factor analyzers,'' US Army Research Laboratory Adelphi United
  States, Tech. Rep., 2016.

\bibitem{aPMLcode}
\BIBentryALTinterwordspacing
D.~S. Pavlichin, J.~Jiao, and T.~Weissman. (2017) Approximate profile maximum
  likelihood. [Online]. Available: \url{https://doi.org/10.5281/zenodo.1043617}
\BIBentrySTDinterwordspacing

\bibitem{Zipf}
S.~T. Piantadosi, ``Zipf's word frequency law in natural language: a critical
  review and future directions,'' \emph{Psychon Bull Rev.}, vol.~21, no.~5, pp.
  1112--1130, Oct. 2014.

\bibitem{10.2307/2237124}
H.~Kesten and N.~Morse, ``A property of the multinomial distribution,''
  \emph{The Annals of Mathematical Statistics}, vol.~30, no.~1, pp. 120--127,
  1959.

\bibitem{Boyd_2004}
S.~Boyd and L.~Vandenberghe, \emph{Convex Optimization}.\hskip 1em plus 0.5em
  minus 0.4em\relax New York, NY, USA: Cambridge University Press, 2004.

\end{thebibliography}
\end{document}